\documentclass[aps,prl,reprint,superscriptaddress,longbibliography]{revtex4-2}

\usepackage[T1]{fontenc}
\usepackage{lmodern}
\usepackage{amsmath,amssymb,bm,amsthm}
\usepackage{graphicx}
\usepackage{booktabs}
\usepackage{tabularx}
\usepackage{placeins}
\usepackage{microtype}
\usepackage[hidelinks]{hyperref}
\usepackage{xcolor}

\newcommand{\Tr}{\operatorname{Tr}}
\newcommand{\E}{\mathbb{E}}
\newcommand{\ii}{\mathrm{i}}
\newcommand{\Chat}{\widehat{G}}

\newtheorem{theorem}{Theorem}[section]
\newtheorem{lemma}[theorem]{Lemma}
\newtheorem{proposition}[theorem]{Proposition}

\begin{document}

\title{Postselection-Free Reconstruction of Monitored SPT Flux-Charge Responses}

\author{Shuai Zeng}
\affiliation{School of Communication and Information Engineering, Chongqing University of Posts and Telecommunications, Chongqing 400065, China}
\email{zengshuai@cqupt.edu.cn}

\date{August 15, 2026}

\begin{abstract}
A class of monitored topological invariants compares ordinary and symmetry-twisted quantum trajectories with identical stochastic records, so direct estimation requires exponentially costly matched-record postselection. We replace the twisted quantum experiment by uniformly randomized symmetry eigenstates with known charges, ordinary-boundary monitoring, and an offline twist decoder. Their correlation yields an exact finite-time response identity and removes this experimental sampling bottleneck. When the ordinary charge sharpens and the ordinary-to-flux deformation has trivial net readout-character flow, the response converges to the projective commutator of the symmetry-protected-topological phase; generator-pair responses then determine its finite-Abelian cohomology class. Spectator-sector crossings may close the global Lyapunov gap without changing this character-valued response. In monitored cluster circuits, the decoder is exact and polynomial in the Gaussian limit, while Gaussian-drop reconstruction treats interacting records; independent joint-sector spectroscopy and effect-state diagnostics validate the topological deformation.
\end{abstract}

\maketitle

Continuous measurement is becoming an active resource for organizing and protecting many-body quantum information. Its competition with coherent dynamics produces measurement-induced phases distinguished by how quantum information is encoded across stochastic trajectories~\cite{Skinner2019,Lavasani2021}. Symmetry can further stabilize distinct area-law phases, including measurement-induced symmetry-protected-topological (SPT) phases~\cite{Lavasani2021,MorralYepes2023}. Their experimental characterization calls for observables reconstructed directly from the classical measurement record.

SPT order is invisible to generic local observables. In one dimension, its universal content is carried by a projective symmetry action on the entanglement degrees of freedom~\cite{Schuch2011,PollmannTurner2012,Chen2013}. A symmetry flux $h$ inserted through a ring changes the charge of a second symmetry element $g$ by the projective commutator
\begin{equation}
B_{\omega}(g,h)=\frac{\omega(g,h)}{\omega(h,g)},
\label{eq:projective_commutator}
\end{equation}
which is a standard background-field response of the SPT phase~\cite{ShiozakiRyu2017,Wang2015}. For monitored systems, symmetry and topology can also be organized through Kraus and Lyapunov structures~\cite{XiaoKawabata2026}. Oshima \textit{et al.} introduced a trajectory invariant that compares the dominant symmetry charge obtained from ordinary and $h$-twisted dynamics under the \emph{same} stochastic record~\cite{Oshima2025}.

That definition creates an operational obstruction. Let $s=(r_1,\ldots,r_M)$ denote the complete sequence of stochastic outcomes, and let $\tau_h s$ denote its canonically twisted counterpart. Two independently run instruments retain a useful pair only with probability
\begin{equation}
P_{\mathrm{match}}=\sum_s p_0(s)\,p_h(\tau_h s),
\label{eq:match_probability}
\end{equation}
which generically falls exponentially with the number $M$ of outcomes for an extensive stochastic record~\cite{Oshima2025}. Existing postselection-free approaches can access entanglement dynamics, learn ensemble properties or global charges, and classify monitored phases from measurement records~\cite{Ippoliti2021,McGinley2024,Yu2026}. Here we reconstruct the complementary quantized, group-valued flux-charge response of Eq.~\eqref{eq:projective_commutator} from ordinary trajectories and an offline canonical twist decoder.

Here we replace the second quantum experiment by a randomized-input estimator. Operationally, one samples a known symmetry-charge eigenstate, runs the ordinary instrument, and evaluates its record with an offline canonical twist decoder. Their correlation obeys an exact finite-time identity using ordinary-record sampling. The sampling reduction is universal; the classical decoding complexity is set by the model-specific decoder. Charge sharpening converts the soft posterior response into the ordinary/twisted charge product, and the canonical flux--charge relation identifies its quantized limit with Eq.~\eqref{eq:projective_commutator}. We further show that only nontrivial net spectral flow of the \emph{readout character} modifies the endpoint response: crossings among spectator symmetry sectors leave it unchanged even when the global Lyapunov gap closes. The monitored cluster model realizes this with an exact polynomial $J=0$ decoder and a validated Gaussian-drop approximation for interacting records.

\begin{figure*}[t]
\centering
\includegraphics[width=0.88\textwidth]{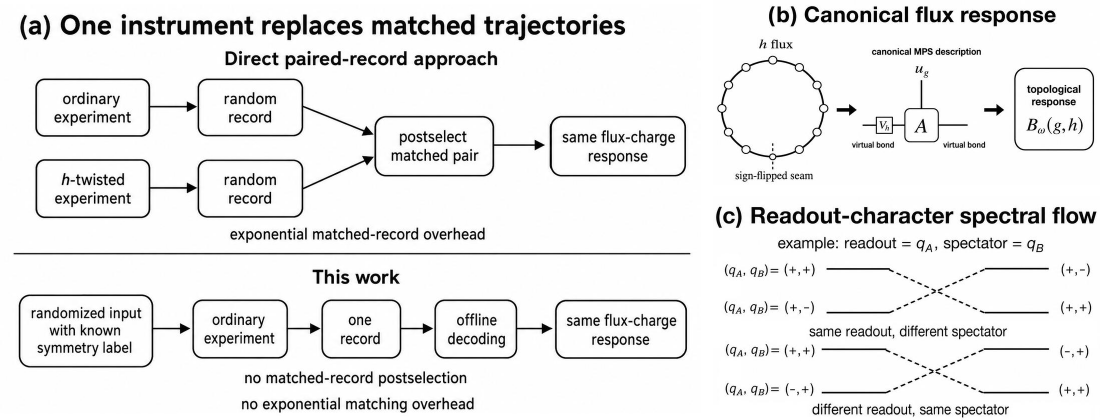}
\caption{\textbf{Single-instrument reconstruction and its topological protection.}
(a) Direct ordinary/$h$-twisted estimation postselects matched record pairs and incurs exponential matching overhead. The randomized protocol uses known-charge inputs, one ordinary record, and offline decoding to reconstruct the same quantized flux--charge response.
(b) A background $h$ flux is gauge fixed to a sign-flipped seam and, in the canonical MPS description, to a virtual $V_h$ insertion carrying $B_\omega(g,h)$.
(c) For readout $q_A$ and spectator $q_B$, spectator exchange leaves the response unchanged, whereas a net readout-character exchange changes it.}
\label{fig:concept}
\end{figure*}

\paragraph{Randomized one-instrument protocol.}
Consider a monitored instrument with ordinary boundary condition $a=0$ and a canonical $h$-twisted counterpart $a=h$. A record $s$ defines a Kraus product $K_s^{(a)}$ and effect operator
\begin{equation}
E_s^{(a)}=K_s^{(a)\dagger}K_s^{(a)},
\qquad
\sum_s E_s^{(0)}=I.
\label{eq:effect}
\end{equation}
Let $G$ be a finite Abelian onsite symmetry group with unitary representation $U_g$. We assume record-wise strong symmetry,
\begin{equation}
[E_s^{(a)},U_g]=0
\quad \text{for all }s,a,g,
\label{eq:strong_symmetry}
\end{equation}
so each effect decomposes into global charge sectors. Choose a complete common symmetry eigenbasis $\{|n\rangle\}_{n=1}^{d}$,
\begin{equation}
U_g|n\rangle=\chi_n(g)|n\rangle.
\label{eq:input_character}
\end{equation}
For the cluster realization below, these inputs may be $X$-basis product states with known $U_A$ and $U_B$ parities.
For a record $s$, let $D_{g,h}^{\rm tr}(s)\in U(1)$ denote the $g$-character value with maximal grouped trace weight in the canonically $h$-twisted effect, using a fixed tie-breaking convention. A practical approximate decoder is denoted by $\widetilde D_{g,h}(s)$.

Each experimental shot consists of five steps: sample $|n\rangle$ uniformly from the common symmetry eigenbasis; run only the ordinary monitored instrument; store the known character $\chi_n(g)$ and the ordinary record $s$; evaluate $D_{g,h}^{\rm tr}(s)$ offline; and return
\begin{equation}
X_{g,h}(n,s)=\chi_n(g)^*D_{g,h}^{\rm tr}(s).
\label{eq:single_shot}
\end{equation}
All quantum operations use the ordinary boundary condition, while the canonical twist enters through a deterministic offline map from the observed record to a symmetry-charge label. Quantum sampling therefore uses a single ordinary record per shot; the classical decoding complexity is analyzed separately below, together with the approximation $\widetilde D_{g,h}$.

The conditional record probability is $p(s|n)=\langle n|E_s^{(0)}|n\rangle$. Uniformly averaging Eq.~\eqref{eq:single_shot} therefore gives
\begin{align}
W_{g,h}
&\equiv \E_{n,s}\!\left[\chi_n(g)^*D_{g,h}^{\rm tr}(s)\right] \notag\\
&=\frac{1}{d}\sum_s D_{g,h}^{\rm tr}(s)\Tr\!\left[U_g^\dagger E_s^{(0)}\right] \notag\\
&=\sum_s p_0(s)m_g^{(0)}(s)D_{g,h}^{\rm tr}(s),
\label{eq:exact_identity}
\end{align}
where
\begin{equation}
p_0(s)=\frac{\Tr E_s^{(0)}}{d},
\qquad
m_g^{(0)}(s)=
\frac{\Tr[U_g^\dagger E_s^{(0)}]}{\Tr E_s^{(0)}}.
\label{eq:posterior_moment}
\end{equation}
Equation~\eqref{eq:exact_identity} follows directly from basis completeness and the Born rule at arbitrary finite size and monitoring time. The quantity $m_g^{(0)}(s)$ is the posterior symmetry-charge moment conditioned on the ordinary record. At finite time its modulus can be smaller than unity, so $W_{g,h}$ is an exact \emph{soft} counterfactual response; charge sharpening and flux protection below yield its quantized limit.

Because $|X_{g,h}|\leq 1$, standard concentration bounds imply that additive precision $\epsilon$ with failure probability $\delta$ requires
\begin{equation}
N=O\!\left(\epsilon^{-2}\ln\delta^{-1}\right)
\label{eq:sampling_complexity}
\end{equation}
independent ordinary trajectories. This removes the exponential experimental repetition associated with Eq.~\eqref{eq:match_probability}. The sampling cost is universal; the classical cost is set by the model-dependent twist decoder. Below we derive the quantization conditions and then construct the exact Gaussian-limit decoder and its interacting Gaussian-drop approximation.

\paragraph{Canonical flux and the SPT response.}
We now make the counterfactual twist microscopic. Consider an even periodic cluster chain with two-site unit cells $(A_x,B_x)$ and strong symmetries
\begin{equation}
U_A=\prod_x X_{A_x},\qquad U_B=\prod_x X_{B_x}.
\label{eq:sublattice_symmetries}
\end{equation}
The cluster generators are
\begin{equation}
K_{A,x}=Z_{B,x-1}X_{A,x}Z_{B,x},\qquad
K_{B,x}=Z_{A,x}X_{B,x}Z_{A,x+1}.
\label{eq:cluster_generators}
\end{equation}
A flat $U_B$ background field with nontrivial holonomy through one seam can be gauge fixed so that only the seam-crossing generator changes sign, $K_{A,0}^{(h)}=-K_{A,0}^{(0)}$ [Fig.~\ref{fig:concept}(b)]. For the weak-measurement filter $M_r(P)=aI+rbP$, one has $M_r(-P)=M_{-r}(P)$; hence the canonical flux is implemented in the offline decoder by flipping the corresponding recorded outcome. The $X_i$ filters and the symmetry-preserving $X_iX_{i+2}$ interactions are neutral under this background field and remain unchanged. This identifies the decoder operation as the lattice realization of a background-gauge holonomy.

For a normal symmetric matrix-product-state branch with virtual action $V_g$,
\begin{equation}
\sum_j [u_g]_{ij}A^j=e^{\ii\theta_g}V_g^\dagger A^iV_g,
\qquad
V_gV_h=\omega(g,h)V_{gh},
\label{eq:mps_symmetry}
\end{equation}
the canonical MPS $h$-flux state is defined by inserting $V_h$ on the closing virtual bond. Moving the $g$ action through this defect gives
\begin{equation}
D_{g,0}^{\rm eig*}D_{g,h}^{\rm eig}
=B_\omega(g,h)
=\frac{\omega(g,h)}{\omega(h,g)}.
\label{eq:mps_flux_charge}
\end{equation}
This standard MPS relation identifies the cohomological meaning of the canonical flux response. The background-field prescription fixes the canonical defect: a zero-dimensional seam decoration would multiply the flux charge by an ordinary linear character while leaving the bulk SPT class unchanged, so it represents a distinct defect choice. For the monitored effects studied here, the explicit lattice deformations and readout-character flow below establish the microscopic identification with the canonical flux. The virtual-space proof and canonical-defect conditions are given in the Supplemental Material~\cite{SupplementalMaterial}.

To connect the physical effect to controlled endpoints, we construct two explicit record-wise deformations. Along the \emph{cluster path}, the $X_i$ measurements and $X_iX_{i+2}$ interactions are continuously switched off while an arbitrarily weak, symmetry-preserving cluster pinning removes accidental endpoint degeneracies. At the commuting endpoint, ordinary and twisted effects have the exact product
\begin{equation}
B_{\rm cluster}=-1.
\label{eq:cluster_endpoint}
\end{equation}
Along the \emph{trivial path}, cluster measurements and interactions are switched off while an arbitrarily weak $X$ pinning is introduced. The flux is then invisible to all remaining filters, so the two endpoint effects coincide and
\begin{equation}
B_X=+1.
\label{eq:trivial_endpoint}
\end{equation}
The proof path includes the pinning for $\lambda>0$ and returns exactly to the physical instrument at $\lambda=0$. These deformations connect the physical effects to explicit fixed points and reduce quantization to a directly testable character-flow condition.

\paragraph{Character-resolved spectral protection.}
The relevant spectral condition is weaker than a globally open Lyapunov gap. Decompose an effect along a continuous path as
\begin{equation}
E^{(a)}(\lambda)=\bigoplus_{\alpha\in\Chat}E_{\alpha}^{(a)}(\lambda),
\qquad
\Lambda_\alpha^{(a)}(\lambda)=\lambda_{\max}\!\left(E_\alpha^{(a)}(\lambda)\right).
\label{eq:character_blocks}
\end{equation}
For a chosen readout element $g$, group all irreducible sectors with the same character value,
\begin{equation}
\begin{aligned}
\Lambda_z^{(a,g)}(\lambda)
&=\max_{\alpha:\,\alpha(g)=z}\Lambda_\alpha^{(a)}(\lambda),\\
D_{g,a}^{\rm eig}(\lambda)
&=\arg\max_z\Lambda_z^{(a,g)}(\lambda).
\end{aligned}
\label{eq:grouped_character_weights}
\end{equation}
Continuity then gives a readout-character spectral-flow theorem: $D_{g,a}^{\rm eig}$ can change only when grouped weights with distinct values $z\neq z'$ cross. For a $\mathbb Z_2$ readout,
\begin{equation}
D_{g,a}^{\rm eig}(0)=D_{g,a}^{\rm eig}(1)(-1)^{N_{g,a}},
\label{eq:flow_parity}
\end{equation}
where $N_{g,a}$ counts isolated crossings that change the measured character, modulo two. More generally, the relevant event is that the combined ordinary-plus-twisted endpoint product differs from the fixed-point value. Thus the response changes only under net readout-character flow.

Crossings between sectors $\alpha$ and $\alpha'$ satisfying $\alpha(g)=\alpha'(g)$ are spectators: they may exchange the globally dominant eigenstate without changing the decoded response [Fig.~\ref{fig:concept}(c), with readout $q_A$ and spectator $q_B$]. This produces a hierarchy of three gaps [Fig.~\ref{fig:validation}(a)]. The \emph{readout-character gap} separates the grouped maxima in Eq.~\eqref{eq:grouped_character_weights} and locally protects $D_g^{\rm eig}$; a \emph{spectator gap} inside the winning character group may close without changing the response; and the top-two gap inside a complete symmetry block checks continuity of the many-body state after all global charges are fixed. In the interacting cluster trajectories, phase-matched deformations preserve the fixed-point endpoint product and a finite complete-block internal gap even though the ordinary and twisted branch characters can switch along the path. Only their net contribution to the endpoint product matters. The resolved spectator crossing also retains the appropriate string-order and entanglement-spectrum diagnostics (Supplemental Material~\cite{SupplementalMaterial}). Phase-mismatched fixed-point paths provide controls with nontrivial net readout-character flow [Fig.~\ref{fig:validation}(b)]. Consequently, a character-valued topological response can remain quantized even when the globally smallest Lyapunov gap closes.

Finally, group the ordinary posterior weights by their value on the measured element $g$, and let $\eta_0^{(g)}(s)$ be the weight outside the winning $g$-character. For the $\mathbb Z_2$ response used below, let $P_{\rm odd}^{(g)}$ be the probability that the ordinary/twisted endpoint product differs from its fixed-point value (equivalently, an odd combined flow parity for isolated crossings, with endpoint ties included under the fixed tie-breaking convention), let $P_{\rm tr/eig}$ denote a trace/eigen character disagreement on either branch, and let $P_{\rm dec}$ denote disagreement between $\widetilde D_{g,h}$ and the exact trace-character decoder. Then
\begin{equation}
|\widetilde W-B_\omega|
\leq
2\E\eta_0^{(g)}
+2P_{\rm odd}^{(g)}
+2P_{\rm tr/eig}
+2P_{\rm dec}.
\label{eq:error_budget}
\end{equation}
The bound is insensitive to spectator crossings because they preserve the measured character. For a general finite-Abelian response, the term $2P_{\rm odd}^{(g)}$ is replaced by the mean character-flow distance $\E|D_{g,0}^{\rm eig*}D_{g,h}^{\rm eig}-B_\omega|$. Equation~\eqref{eq:error_budget} turns the asymptotic topology statement into a finite-time bound with separately quantifiable contributions. Full interacting-effect calculations resolve all four terms; their deep-phase empirical sums are $0.179$ and $0.352$ at $L=8$ and are smaller at the larger system sizes (Supplemental Material~\cite{SupplementalMaterial}). The cohomology label is unique whenever the bound is smaller than half the separation between distinct allowed bicharacter values; for a $\mathbb Z_2$ response, an error below unity fixes the sign.

\paragraph{Interacting reconstruction.}
The monitored cluster circuit admits an exact polynomial decoder in the Gaussian limit and a polynomial Gaussian-drop approximation for interacting records. Each local update weakly measures either $X_i$ with probability $p_X$ or the cluster operator $K_i$ with probability $1-p_X$. At $J=0$, an $X$-diagonal Jordan--Wigner map makes the measurement filters Gaussian after resolving total fermion parity; exact parity projection and recombination then give $D_{A,h}^{\rm tr}(s)$ for every finite record in $O(ML^2+L^3)$ time and $O(L^2)$ memory, without a Lyapunov-gap assumption (Supplemental Material~\cite{SupplementalMaterial}). To probe the non-Gaussian regime, we generate Born records with symmetry-preserving interactions $e^{-\ii JX_iX_{i+2}}$ at rate $r_J$, using $\beta=2$, $J=0.1$, and $r_J=0.2$. The same covariance/Pfaffian construction becomes the Gaussian-drop decoder by omitting only these interaction gates from decoder propagation.

Independent full-effect calculations validate the topological bridge. Twenty $L=8$ records per deep phase give the correct fixed-point endpoint response on all forty phase-matched paths. On the nine-point deformation grid, individual $q_A$ branch switches occur in $9/20$ SPT and $8/20$ trivial records, yet the endpoint product remains fixed; one SPT record shows a transient product reversal before returning. Spectator $U_B$ switching occurs in $11/20$ and $6/20$ records, while the complete-sector internal gap remains resolved. These calculations resolve the net character flow encoded by the endpoint product while allowing intermediate branch crossings. Phase-mismatched controls exhibit nontrivial net flow, and near the transition reduced character gaps and mixed endpoint labels mark the loss of phase-resolved protection (Supplemental Material~\cite{SupplementalMaterial}).

Figure~\ref{fig:validation}(c) reports the finite-size performance for $(L,T)=(8,8),(10,12),(12,16)$ with 100 interacting records per phase. Additional 100-record scans across $p_X$ and monitoring time resolve the crossover and finite-time sharpening (Supplemental Material~\cite{SupplementalMaterial}). The Gaussian-drop responses at the deep-phase cuts are $-0.86,-0.96,-0.96$ in the SPT regime and $+0.90,+0.96,+1.00$ in the trivial regime. The error bars are exact 95\% Clopper--Pearson intervals inferred from the corresponding binary failure counts; for zero observed failures they give the corresponding finite upper confidence bound. The decoder-failure fractions are $(7,2,2)\%$ and $(5,2,0)\%$, while independent full interacting-effect calculations observe readout-character endpoint mismatches at $0$--$3\%$; these include any gap-closing tie under the fixed tie-breaking convention, with finite-sample intervals reported in the Supplemental Material~\cite{SupplementalMaterial}. Figure~\ref{fig:validation}(d) benchmarks the decoder for $M=L(2L-8)$ up to $L=32$: the median time remains below $0.21$ s and follows an effective $L^{3.14}$ dependence over the tested range, consistent with the analytical polynomial complexity.

\begin{figure*}[t]
\centering
\includegraphics[width=0.92\textwidth]{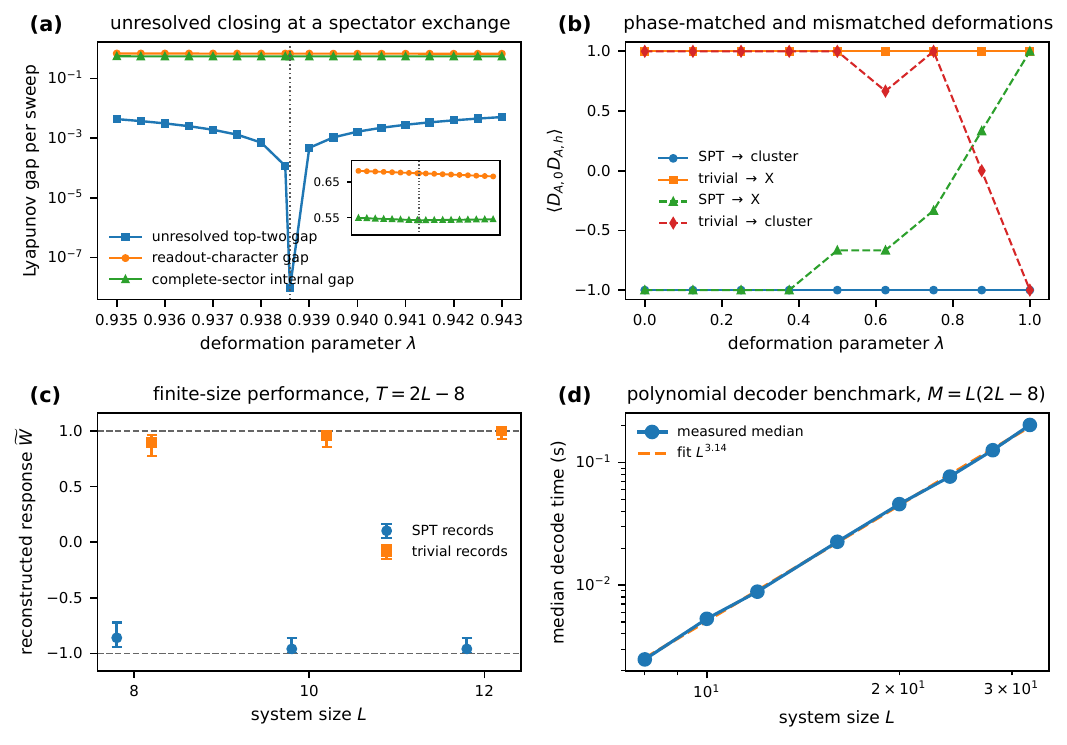}
\caption{\textbf{Character-resolved protection and interacting reconstruction.}
(a) Selected $L=8$ SPT record: the unresolved top-two gap closes at a spectator-charge exchange while the readout-character and complete-sector internal gaps remain finite. Inset: the two finite gaps on a linear scale.
(b) Phase-matched deformations preserve the fixed-point response; mismatched paths require nontrivial net readout-character flow.
(c) Reconstructed response for $T=2L-8$ with 100 interacting records per phase. Error bars are exact 95\% Clopper--Pearson intervals; dashed lines mark the quantized values.
(d) Median decoding time for $M=L(2L-8)$; the dashed line is the fitted $L^{3.14}$ dependence.}
\label{fig:validation}
\end{figure*}

For
\begin{equation}
G=\prod_{i=1}^{r}\mathbb Z_{N_i},
\label{eq:finite_abelian_group}
\end{equation}
the generator-pair responses $B_\omega(g_i,g_j)$ determine the alternating bicharacter and hence the full cohomology class $[\omega]\in H^2(G,U(1))$. Each pair takes values among the $\gcd(N_i,N_j)$-th roots of unity, so at most $r(r-1)/2$ response settings reconstruct the class, with trivial factors omitted. The protocol therefore reconstructs the group-valued topological invariant from generator-pair responses.

In summary, randomized symmetry-charge inputs replace the matched-counterfactual-trajectory comparison by a one-instrument randomized estimator. The exact identity separates sampling from topology; the canonical flux--charge relation and character-flow theorem identify the quantized regime; and the cluster model supplies an exact Gaussian-limit decoder with validated Gaussian-drop reconstruction for interacting records. A global Lyapunov gap may close through spectator sectors while the selected character-valued response remains intact. These results establish a single-instrument route to quantized flux-charge responses in monitored quantum matter.

\paragraph*{Data availability.---} The code used to generate and validate the numerical results is included as ancillary material with this arXiv submission; the reported numerical data can be regenerated from the specified parameters and random seeds.

\bibliography{references}

@article{Skinner2019,
  author = {Skinner, Brian and Ruhman, Jonathan and Nahum, Adam},
  title = {Measurement-Induced Phase Transitions in the Dynamics of Entanglement},
  journal = {Phys. Rev. X},
  volume = {9},
  pages = {031009},
  year = {2019},
  doi = {10.1103/PhysRevX.9.031009}
}

@article{Lavasani2021,
  author = {Lavasani, Ali and Alavirad, Yahya and Barkeshli, Maissam},
  title = {Measurement-induced topological entanglement transitions in symmetric random quantum circuits},
  journal = {Nat. Phys.},
  volume = {17},
  pages = {342--347},
  year = {2021},
  doi = {10.1038/s41567-020-01112-z}
}

@article{MorralYepes2023,
  author = {Morral-Yepes, Ra{\'u}l and Pollmann, Frank and Lovas, Izabella},
  title = {Detecting and stabilizing measurement-induced symmetry-protected topological phases in generalized cluster models},
  journal = {Phys. Rev. B},
  volume = {108},
  pages = {224304},
  year = {2023},
  doi = {10.1103/PhysRevB.108.224304}
}

@article{Oshima2025,
  author = {Oshima, H. and Mochizuki, K. and Hamazaki, R. and Fuji, Y.},
  title = {Topology and Spectrum in Measurement-Induced Phase Transitions},
  journal = {Phys. Rev. Lett.},
  volume = {134},
  pages = {240401},
  year = {2025},
  doi = {10.1103/PhysRevLett.134.240401}
}

@article{PollmannTurner2012,
  author = {Pollmann, Frank and Turner, Ari M.},
  title = {Detection of symmetry-protected topological phases in one dimension},
  journal = {Phys. Rev. B},
  volume = {86},
  pages = {125441},
  year = {2012},
  doi = {10.1103/PhysRevB.86.125441}
}

@article{Schuch2011,
  author = {Schuch, Norbert and P{\'e}rez-Garc{\'i}a, David and Cirac, Ignacio},
  title = {Classifying quantum phases using matrix product states and projected entangled pair states},
  journal = {Phys. Rev. B},
  volume = {84},
  pages = {165139},
  year = {2011},
  doi = {10.1103/PhysRevB.84.165139}
}

@article{Chen2013,
  author = {Chen, Xie and Gu, Zheng-Cheng and Liu, Zheng-Xin and Wen, Xiao-Gang},
  title = {Symmetry protected topological orders and the group cohomology of their symmetry group},
  journal = {Phys. Rev. B},
  volume = {87},
  pages = {155114},
  year = {2013},
  doi = {10.1103/PhysRevB.87.155114}
}

@article{ShiozakiRyu2017,
  author = {Shiozaki, Ken and Ryu, Shinsei},
  title = {Matrix product states and equivariant topological field theories for bosonic symmetry-protected topological phases in (1+1) dimensions},
  journal = {J. High Energy Phys.},
  volume = {2017},
  number = {04},
  pages = {100},
  year = {2017},
  doi = {10.1007/JHEP04(2017)100}
}

@article{Wang2015,
  author = {Wang, Juven and Gu, Zheng-Cheng and Wen, Xiao-Gang},
  title = {Field-Theory Representation of Gauge-Gravity Symmetry-Protected Topological Invariants, Group Cohomology, and Beyond},
  journal = {Phys. Rev. Lett.},
  volume = {114},
  pages = {031601},
  year = {2015},
  doi = {10.1103/PhysRevLett.114.031601}
}

@misc{Yu2026,
  author = {Yu, Hui and Hu, Jiangping and Zhang, Shi-Xin},
  title = {Post-Selection-Free Decoding of Measurement-Induced Area-Law Phases via Neural Networks},
  eprint = {2604.03550},
  archivePrefix = {arXiv},
  year = {2026}
}

@article{McGinley2024,
  author = {McGinley, Max},
  title = {Postselection-free learning of measurement-induced quantum dynamics},
  journal = {PRX Quantum},
  volume = {5},
  pages = {020347},
  year = {2024},
  doi = {10.1103/PRXQuantum.5.020347}
}

@article{Ippoliti2021,
  author = {Ippoliti, Matteo and Khemani, Vedika},
  title = {Postselection-Free Entanglement Dynamics via Spacetime Duality},
  journal = {Phys. Rev. Lett.},
  volume = {126},
  pages = {060501},
  year = {2021},
  doi = {10.1103/PhysRevLett.126.060501}
}

@article{XiaoKawabata2026,
  author = {Xiao, Zhenyu and Kawabata, Kohei},
  title = {Symmetry and Topology of Monitored Quantum Dynamics},
  journal = {Phys. Rev. B},
  volume = {113},
  pages = {134307},
  year = {2026},
  doi = {10.1103/1q83-jcbw}
}

@misc{SupplementalMaterial,
  note = {See Supplemental Material appended to this preprint for complete finite-time proofs, the exact Gaussian decoder, numerical validation, and implementation details.}
}

\clearpage
\onecolumngrid
\setcounter{section}{0}
\setcounter{subsection}{0}
\setcounter{equation}{0}
\setcounter{table}{0}
\setcounter{figure}{0}
\renewcommand{\thesection}{S\arabic{section}}
\renewcommand{\thesubsection}{\thesection.\arabic{subsection}}
\renewcommand{\theequation}{S\arabic{equation}}
\renewcommand{\thetable}{S\arabic{table}}
\renewcommand{\thefigure}{S\arabic{figure}}
\renewcommand{\theHsection}{supp.\arabic{section}}
\renewcommand{\theHsubsection}{supp.\arabic{section}.\arabic{subsection}}
\renewcommand{\theHequation}{supp.\arabic{equation}}
\renewcommand{\theHtable}{supp.\arabic{table}}
\renewcommand{\theHfigure}{supp.\arabic{figure}}

\begin{center}
{\large\bfseries Supplemental Material for ``Postselection-Free Reconstruction of Monitored SPT Flux-Charge Responses''}\\[0.8em]
Shuai Zeng\\
{\small School of Communication and Information Engineering, Chongqing University of Posts and Telecommunications, Chongqing 400065, China}\\
{\small \texttt{zengshuai@cqupt.edu.cn}}\\[0.8em]
\end{center}

This Supplemental Material gives the complete finite-time response proof, charge-sharpening and sampling bounds, the microscopic flux construction, the canonical matrix-product-state (MPS) flux-charge relation, explicit fixed-point deformations, the readout-character spectral-flow theorem, the trace/eigen-sector bridge, the full error budget, the finite-Abelian cohomology reconstruction, the exact Gaussian trace-character decoder, and numerical and implementation details.

\section{General randomized-response identity}

Let $\mathcal H$ have dimension $d$, and let the ordinary monitored instrument have Kraus products $K_s^{(0)}$ indexed by complete records $s$. The corresponding effects are
\begin{equation}
E_s^{(0)}=K_s^{(0)\dagger}K_s^{(0)},\qquad
E_s^{(0)}\ge 0,\qquad
\sum_sE_s^{(0)}=I.
\label{eq:s_effects}
\end{equation}
Let $G$ be a finite Abelian onsite symmetry group represented by unitaries $U_g$. We assume record-wise strong symmetry,
\begin{equation}
[E_s^{(a)},U_g]=0
\quad\text{for every record }s,\text{ boundary condition }a,\text{ and }g\in G.
\label{eq:s_strong}
\end{equation}
Choose a complete common eigenbasis $\{|n\rangle\}_{n=1}^{d}$,
\begin{equation}
U_g|n\rangle=\chi_n(g)|n\rangle.
\label{eq:s_inputchar}
\end{equation}
For each record, $D_{g,h}^{\rm tr}(s)\in U(1)$ denotes the $g$-character value with maximal grouped trace weight in the canonically $h$-twisted effect, using a fixed tie-breaking convention. The practical approximate decoder is denoted $\widetilde D_{g,h}(s)$.

\begin{theorem}[Finite-time randomized-response identity]
If $|n\rangle$ is sampled uniformly, the ordinary instrument is run, and the single-shot variable
\begin{equation}
X_{g,h}(n,s)=\chi_n(g)^*D_{g,h}^{\rm tr}(s)
\label{eq:s_single}
\end{equation}
is recorded, then
\begin{equation}
W_{g,h}
\equiv
\E_{n,s}X_{g,h}
=
\frac{1}{d}\sum_sD_{g,h}^{\rm tr}(s)\Tr[U_g^\dagger E_s^{(0)}]
=
\sum_sp_0(s)m_g^{(0)}(s)D_{g,h}^{\rm tr}(s)
\label{eq:s_identity}
\end{equation}
for every finite system size and monitoring time, where
\begin{equation}
p_0(s)=\frac{\Tr E_s^{(0)}}{d},
\qquad
m_g^{(0)}(s)=
\frac{\Tr[U_g^\dagger E_s^{(0)}]}{\Tr E_s^{(0)}}.
\label{eq:s_moment}
\end{equation}
\end{theorem}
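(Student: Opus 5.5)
The plan is to compute the expectation directly from the joint distribution of the input label $n$ and the ordinary record $s$, then use basis completeness to turn the sum over inputs into a trace.

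\textbf{Step 1: joint distribution.} The input is drawn uniformly, so $\Pr(n)=1/d$. By the Born rule applied to the ordinary instrument, $p(s|n)=\langle n|E_s^{(0)}|n\rangle$. This is a valid conditional distribution because $E_s^{(0)}\ge 0$ and $\sum_sE_s^{(0)}=I$ by Eq.~\eqref{eq:s_effects}. Hence
\begin{equation*}
W_{g,h}=\frac{1}{d}\sum_{n=1}^{d}\sum_s \chi_n(g)^*\,D_{g,h}^{\rm tr}(s)\,\langle n|E_s^{(0)}|n\rangle .
\end{equation*}
The decoder value $D_{g,h}^{\rm tr}(s)$ is a deterministic function of $s$ alone; the twisted effect enters only through this offline map. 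It can therefore be pulled out of the sum over $n$. Since the record set is finite and $|X_{g,h}|\le1$, interchanging the sums is harmless.

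\textbf{Step 2: trace rewriting.} By Eq.~\eqref{eq:s_inputchar}, $U_g^\dagger|n\rangle=\chi_n(g)^*|n\rangle$, because $U_g$ is unitary and $|\chi_n(g)|=1$. This gives
\begin{equation*}
\chi_n(g)^*\langle n|E_s^{(0)}|n\rangle=\langle n|E_s^{(0)}U_g^\dagger|n\rangle .
\end{equation*}
Summing over the complete orthonormal basis then yields $\Tr[E_s^{(0)}U_g^\dagger]=\Tr[U_g^\dagger E_s^{(0)}]$ by cyclicity. This establishes the middle expression in Eq.~\eqref{eq:s_identity}. Strong symmetry, Eq.~\eqref{eq:s_strong}, is not strictly needed for this step. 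Its role is to make $m_g^{(0)}(s)$ a genuine convex combination of characters, so that $|m_g^{(0)}(s)|\le1$.

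\textbf{Step 3: normalization.} For records with $\Tr E_s^{(0)}>0$, I multiply and divide by $\Tr E_s^{(0)}$ and use the definitions in Eq.~\eqref{eq:s_moment} to obtain $p_0(s)m_g^{(0)}(s)D_{g,h}^{\rm tr}(s)$. Records with $\Tr E_s^{(0)}=0$ satisfy $E_s^{(0)}=0$ by positivity. Such records contribute nothing and never occur, so $m_g^{(0)}$ may be defined arbitrarily on them.

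\textbf{Main obstacle.} The only subtle point is this zero-weight convention, together with making explicit that no approximation enters at any step. Everything is exact finite linear algebra valid for every $d$ and record length, so the identity holds for every finite system size and monitoring time as claimed.
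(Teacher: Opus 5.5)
Your proof is correct and follows the paper's argument: the Born-rule conditional probability, pulling the record-only decoder out of the sum over inputs, and completeness of the common eigenbasis to produce $\Tr[U_g^\dagger E_s^{(0)}]$. Your zero-trace-record convention is a harmless addition that the paper leaves implicit. One small correction to a side remark: strong symmetry is not needed even for $|m_g^{(0)}(s)|\le1$, since $m_g^{(0)}(s)=\sum_n\chi_n(g)^*\langle n|E_s^{(0)}|n\rangle/\Tr E_s^{(0)}$ is always a convex combination of characters.
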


\begin{proof}
The conditional probability of record $s$ given input $|n\rangle$ is
\begin{equation}
p(s|n)=\langle n|E_s^{(0)}|n\rangle.
\end{equation}
Therefore
\begin{align}
\E_{n,s}X_{g,h}
&=
\frac1d\sum_{n,s}\chi_n(g)^*D_{g,h}^{\rm tr}(s)
\langle n|E_s^{(0)}|n\rangle \notag\\
&=
\frac1d\sum_sD_{g,h}^{\rm tr}(s)
\sum_n\langle n|U_g^\dagger E_s^{(0)}|n\rangle \notag\\
&=
\frac1d\sum_sD_{g,h}^{\rm tr}(s)\Tr[U_g^\dagger E_s^{(0)}],
\end{align}
which is Eq.~\eqref{eq:s_identity}.
\end{proof}

The exact identity follows from randomized experimental design and the Born rule at arbitrary finite size and monitoring time. It is written for $D_{g,h}^{\rm tr}$; replacing this quantity by $\widetilde D_{g,h}$ produces the decoder contribution treated in the finite-time error budget below.

The construction extends to a general ensemble $\{p_n,\rho_n,w_n\}$ and yields the same response whenever
\begin{equation}
\sum_np_nw_n\rho_n=\frac{U_g^\dagger}{d}.
\label{eq:s_generalensemble}
\end{equation}
The eigenbasis construction is operationally simple because $w_n=\chi_n(g)^*$ has unit modulus and the input label is known before the monitored evolution.

\section{Charge sharpening and sampling complexity}

Because $G$ is finite Abelian, its irreducible representations are one-dimensional characters $\alpha\in\Chat$. Define
\begin{equation}
P_\alpha=\frac1{|G|}\sum_{x\in G}\alpha(x)^*U_x,
\qquad
Z_\alpha^{(a)}(s)=\Tr[P_\alpha E_s^{(a)}].
\label{eq:s_projectors}
\end{equation}
For the ordinary effect,
\begin{equation}
p_\alpha(s)=\frac{Z_\alpha^{(0)}(s)}{\Tr E_s^{(0)}},
\qquad
\sum_\alpha p_\alpha(s)=1.
\label{eq:s_palpha}
\end{equation}
For a fixed readout element $g$, group irreducible sectors by the character value $z=\alpha(g)$:
\begin{equation}
p_z^{(g)}(s)=\sum_{\alpha:\alpha(g)=z}p_\alpha(s),
\qquad
m_g^{(0)}(s)=\sum_zp_z^{(g)}(s)z^*.
\label{eq:s_grouped_posterior}
\end{equation}
Let $z_0(s)$ uniquely maximize $p_z^{(g)}(s)$ and define
\begin{equation}
\eta_0^{(g)}(s)=1-p_{z_0}^{(g)}(s),
\qquad
D_{g,0}^{\rm tr}(s)=z_0(s).
\label{eq:s_grouped_eta}
\end{equation}

\begin{lemma}[Readout-character sharpening bound]
\begin{equation}
|m_g^{(0)}(s)-D_{g,0}^{\rm tr}(s)^*|
\le 2\eta_0^{(g)}(s).
\label{eq:s_sharp}
\end{equation}
\end{lemma}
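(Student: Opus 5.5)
The plan is to write $m_g^{(0)}(s)$ as a convex combination of unit-modulus values and split off the winning group. First I would check that the moment has the grouped form of Eq.~\eqref{eq:s_grouped_posterior}. Since $U_x$ commutes with $E_s^{(0)}$ by Eq.~\eqref{eq:s_strong}, character orthogonality on the finite Abelian group gives $\sum_\alpha P_\alpha=I$ and $U_g^\dagger=\sum_\alpha \alpha(g)^*P_\alpha$. Hence
\begin{equation*}
\Tr[U_g^\dagger E_s^{(0)}]=\sum_\alpha\alpha(g)^*Z_\alpha^{(0)}(s).
\end{equation*}
Dividing by $\Tr E_s^{(0)}$ yields $m_g^{(0)}=\sum_\alpha p_\alpha\alpha(g)^*=\sum_z p_z^{(g)}z^*$. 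The $p_\alpha\ge0$ because $P_\alpha$ is an orthogonal projector commuting with the positive operator $E_s^{(0)}$.

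Next I separate the term $z=z_0$. Using $\sum_z p_z^{(g)}=1$ and $D_{g,0}^{\rm tr}(s)^*=z_0^*$, this gives
\begin{equation*}
m_g^{(0)}-z_0^*=\bigl(p_{z_0}^{(g)}-1\bigr)z_0^*+\sum_{z\neq z_0}p_z^{(g)}z^*.
\end{equation*}
The triangle inequality with $|z|=1$ then bounds this by $\eta_0^{(g)}+\sum_{z\neq z_0}p_z^{(g)}=2\eta_0^{(g)}$.

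A slightly sharper route is to rewrite the difference as $\sum_{z\neq z_0}p_z^{(g)}(z^*-z_0^*)$. This gives $\le 2\eta_0^{(g)}$ directly, since $|z^*-z_0^*|\le2$.

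No step is a real obstacle. The only point needing care is the identification $D_{g,0}^{\rm tr}(s)=z_0(s)$ together with the conjugation convention: $m_g^{(0)}$ uses $U_g^\dagger$, so the comparison must be with $z_0^*$. The uniqueness of the maximizer (or the fixed tie-breaking rule) is not actually needed for the inequality, which holds for any choice of $z_0$.
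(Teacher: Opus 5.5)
Your proposal is correct and follows the paper's proof: both split off the winning group $z_0$ in $m_g^{(0)}=\sum_z p_z^{(g)}z^*$ and apply the triangle inequality with $|z|=1$ to obtain $(1-p_{z_0}^{(g)})+\sum_{z\ne z_0}p_z^{(g)}=2\eta_0^{(g)}$. Your added derivation of the grouped form from $U_g^\dagger=\sum_\alpha\alpha(g)^*P_\alpha$, which the paper simply takes from Eq.~\eqref{eq:s_grouped_posterior}, is correct, as is your remark that uniqueness of $z_0$ is not needed; positivity of $p_\alpha$ in fact holds for any orthogonal projector without invoking strong symmetry.
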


\begin{proof}
Using Eq.~\eqref{eq:s_grouped_posterior},
\begin{align}
m_g^{(0)}-z_0^*
&=
[p_{z_0}^{(g)}-1]z_0^*
+\sum_{z\ne z_0}p_z^{(g)}z^*.
\end{align}
Every character value has unit modulus, so the triangle inequality gives
\begin{equation}
|m_g^{(0)}-z_0^*|
\le (1-p_{z_0}^{(g)})+\sum_{z\ne z_0}p_z^{(g)}
=2\eta_0^{(g)}.
\end{equation}
\end{proof}

For a $\mathbb Z_2$ readout, $m_g^{(0)}\in[-1,1]$ and
\begin{equation}
2\eta_0^{(g)}=1-|m_g^{(0)}|.
\label{eq:s_z2exact}
\end{equation}
Thus the finite-time soft response becomes the hard readout-character product continuously as the ordinary charge sharpens. Spectator sectors that share the same value on $g$ are already combined in Eq.~\eqref{eq:s_grouped_posterior}.

A finite-time sufficient condition can be stated directly in terms of effect eigenvalues. Let $\Lambda_*$ be the largest eigenvalue in the winning readout-character group and suppose every eigenvalue outside that group is at most $\Lambda_*e^{-2T\Delta_c}$. If the total dimension of the competing groups is $d_{\rm out}$, then
\begin{equation}
\eta_0^{(g)}(s)
\le d_{\rm out}e^{-2T\Delta_c}.
\label{eq:s_lyapsharp}
\end{equation}
Using $d_{\rm out}<q^L$ gives the sufficient joint size-time condition
\begin{equation}
2T\Delta_c-L\log q\longrightarrow+\infty.
\label{eq:s_sufficient_joint}
\end{equation}

Because $|X_{g,h}|=1$, ordinary bounded-variable concentration applies. For a real $\mathbb Z_2$ response, Hoeffding's inequality yields
\begin{equation}
\Pr(|\widehat W-W|\ge\epsilon)
\le 2e^{-N\epsilon^2/2}.
\label{eq:s_hoeffdingreal}
\end{equation}
For a complex response, applying the real bound separately to real and imaginary parts gives, for example,
\begin{equation}
\Pr(|\widehat W-W|\ge\epsilon)
\le 4e^{-N\epsilon^2/4}.
\label{eq:s_hoeffdingcomplex}
\end{equation}
Hence fixed additive precision and confidence require
\begin{equation}
N=O\!\left(\epsilon^{-2}\log\delta^{-1}\right)
\label{eq:s_samplecomplexity}
\end{equation}
ordinary trajectories. This sampling statement is independent of whether the offline decoder itself is efficient.

\section{Canonical microscopic symmetry twist}

We now specialize to the even-length cluster ring. Group sites into unit cells
\begin{equation}
A_x=2x,\qquad B_x=2x+1,\qquad x=0,\ldots,N-1,
\end{equation}
with $L=2N$. The two onsite symmetries are
\begin{equation}
U_A=\prod_xX_{A_x},
\qquad
U_B=\prod_xX_{B_x}.
\label{eq:s_UAB}
\end{equation}
The cluster generators are
\begin{equation}
K_{A,x}=Z_{B,x-1}X_{A,x}Z_{B,x},
\qquad
K_{B,x}=Z_{A,x}X_{B,x}Z_{A,x+1}.
\label{eq:s_cluster_generators}
\end{equation}
They obey
\begin{equation}
\prod_xK_{A,x}=U_A,\qquad
\prod_xK_{B,x}=U_B.
\label{eq:s_product_stabilizers}
\end{equation}

A two-outcome weak measurement of a Pauli operator $P$ is represented by
\begin{equation}
M_r(P)=
\frac{e^{\beta rP/2}}{\sqrt{2\cosh\beta}}
=aI+rbP,
\qquad r=\pm1,
\label{eq:s_weakfilter}
\end{equation}
where
\begin{equation}
a=\frac{\cosh(\beta/2)}{\sqrt{2\cosh\beta}},
\qquad
b=\frac{\sinh(\beta/2)}{\sqrt{2\cosh\beta}}.
\end{equation}

Insert a flat $U_B$ background connection with holonomy $h=U_B$ through the ring, and choose a one-cut gauge on the bond between unit cells $N-1$ and $0$. Only the seam-crossing $A$-sublattice generator changes:
\begin{equation}
K_{A,0}^{(h)}=-K_{A,0},
\qquad
K_{A,x\ne0}^{(h)}=K_{A,x},
\qquad
K_{B,x}^{(h)}=K_{B,x}.
\label{eq:s_microscopic_twist}
\end{equation}
Equation~\eqref{eq:s_weakfilter} then gives
\begin{equation}
M_r(K_{A,0}^{(h)})
=
M_r(-K_{A,0})
=
M_{-r}(K_{A,0}).
\label{eq:s_outcome_flip}
\end{equation}
Thus the canonical twist is implemented in the decoder by flipping the sign of the recorded outcome whenever the seam-crossing cluster measurement occurs. The one-site $X_i$ filters and the symmetry-preserving interactions $e^{-\ii JX_iX_{i+2}}$ are neutral under this flat connection and require no modification.

Moving the gauge cut conjugates the microscopic operators by a product of onsite $U_B$ transformations on an interval. Because the ordinary and twisted effects are compared with the same gauge convention and $U_A$ commutes with this gauge transformation, the decoded $U_A$ charge and the flux-charge response are cut independent. What matters is the holonomy, not the position of the seam.

\section{Canonical MPS flux-charge relation}

Consider first a translation-invariant injective MPS
\begin{equation}
|\Psi_0\rangle
=
\sum_{\boldsymbol i}
\Tr(A^{i_1}\cdots A^{i_L})
|\boldsymbol i\rangle.
\label{eq:s_mps0}
\end{equation}
Symmetry of the state implies that the physical action can be pushed to the virtual space:
\begin{equation}
\sum_j(u_g)_{ij}A^j
=
e^{\ii\theta_g}V_g^\dagger A^iV_g.
\label{eq:s_mpssym}
\end{equation}
The virtual operators form a projective representation,
\begin{equation}
V_gV_h=\omega(g,h)V_{gh}.
\label{eq:s_projective}
\end{equation}
For Abelian $G$, define the gauge-invariant alternating bicharacter
\begin{equation}
B_\omega(g,h)
=
\frac{\omega(g,h)}{\omega(h,g)}.
\label{eq:s_bichar}
\end{equation}

The canonical $h$-flux state is obtained by inserting $V_h$ on the virtual closing bond:
\begin{equation}
|\Psi_h\rangle
=
\sum_{\boldsymbol i}
\Tr(A^{i_1}\cdots A^{i_L}V_h)
|\boldsymbol i\rangle.
\label{eq:s_mpsflux}
\end{equation}

\begin{theorem}[Canonical MPS flux-charge relation]
With the orientation convention of Eq.~\eqref{eq:s_mpsflux},
\begin{equation}
D_{g,0}^{\rm eig*}D_{g,h}^{\rm eig}=B_\omega(g,h).
\label{eq:s_mps_theorem}
\end{equation}
\end{theorem}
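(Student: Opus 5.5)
We interpret $D_{g,a}^{\rm eig}$ in this MPS setting as the eigenvalue of the global operator $U_g=u_g^{\otimes L}$ on the state $|\Psi_a\rangle$. The proof then has two parts: first show that both $|\Psi_0\rangle$ and $|\Psi_h\rangle$ are $U_g$ eigenstates, then compute the ratio of their eigenvalues. The only input is the local push-through relation Eq.~\eqref{eq:s_mpssym} together with the projective multiplication law Eq.~\eqref{eq:s_projective}.

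\textbf{Step 1: the ordinary ring.} Apply $u_g$ on every site. By Eq.~\eqref{eq:s_mpssym}, each tensor becomes $e^{\ii\theta_g}V_g^\dagger A^{i}V_g$. In the product, neighbouring factors $V_gV_g^\dagger$ cancel, and the trace is cyclic. This gives
\begin{equation}
U_g|\Psi_0\rangle=e^{\ii L\theta_g}|\Psi_0\rangle,
\end{equation}
so $D_{g,0}^{\rm eig}=e^{\ii L\theta_g}$.

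\textbf{Step 2: the flux ring.} The same substitution turns the coefficient of $|\Psi_h\rangle$ into
\begin{equation}
e^{\ii L\theta_g}\Tr\!\left(V_g^\dagger A^{i_1}\cdots A^{i_L}V_gV_h\right)
=e^{\ii L\theta_g}\Tr\!\left(A^{i_1}\cdots A^{i_L}\,V_gV_hV_g^\dagger\right).
\end{equation}
Next we commute $V_g$ past $V_h$. Eq.~\eqref{eq:s_projective} gives $V_gV_h=\omega(g,h)V_{gh}$ and $V_hV_g=\omega(h,g)V_{hg}$. Since $G$ is Abelian, $V_{gh}=V_{hg}$, hence
\begin{equation}
V_gV_h=\frac{\omega(g,h)}{\omega(h,g)}\,V_hV_g=B_\omega(g,h)\,V_hV_g .
\end{equation}
Therefore $V_gV_hV_g^\dagger=B_\omega(g,h)V_h$, which gives
\begin{equation}
U_g|\Psi_h\rangle=e^{\ii L\theta_g}B_\omega(g,h)|\Psi_h\rangle .
\end{equation}

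\textbf{Step 3: the ratio.} Dividing the two eigenvalues, and using that the phases have unit modulus so that $D^*=D^{-1}$, the factor $e^{\ii L\theta_g}$ cancels. The result is Eq.~\eqref{eq:s_mps_theorem}. The result does not depend on the projective gauge: rescaling $V_g\to c_gV_g$ leaves $V_gV_hV_g^\dagger$ unchanged. The orientation convention matters only through where $V_h$ sits, namely to the right of the last tensor. Placing it on the other side would produce $B_\omega(h,g)=B_\omega(g,h)^{-1}$, which is the reason the statement fixes the convention of Eq.~\eqref{eq:s_mpsflux}.

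\textbf{Main obstacle.} The step that needs care is making sure $|\Psi_h\rangle\neq0$, so that it is a genuine eigenstate and $D_{g,h}^{\rm eig}$ is well defined. If $B_\omega(g',h)\neq1$ for some $g'$, the defect state can vanish for particular values of $L$. To handle this we use injectivity: for $L$ at least the injectivity length, the map $X\mapsto\sum_{\boldsymbol i}\Tr(A^{i_1}\cdots A^{i_L}X)|\boldsymbol i\rangle$ is injective, and $V_h\neq0$. This establishes nonvanishing of $|\Psi_h\rangle$. We would also note the phase-compatibility requirement on $L$, namely that $e^{\ii L\theta_g}$ is a consistent character, so that the eigenvalues are the ones the decoder reads out.
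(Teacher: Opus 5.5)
Your proof is correct and follows the paper's own argument: you push the symmetry action through every tensor, use cyclicity of the trace, and invoke $V_gV_hV_g^\dagger=B_\omega(g,h)V_h$, which you usefully derive from the Abelian multiplication law, while your injectivity argument (with $X=V_h$, and equally with $X=I$ for $|\Psi_0\rangle$) secures a nonvanishing condition that the paper leaves implicit. One side remark is wrong and should be dropped: since $\Tr(V_hA^{i_1}\cdots A^{i_L})=\Tr(A^{i_1}\cdots A^{i_L}V_h)$, putting $V_h$ on the other side of the closing bond gives literally the same state (and putting it on any other bond gives the same charge), so the orientation convention really fixes that $V_h$ rather than $V_h^{-1}$ is inserted relative to the $V_g^\dagger A^iV_g$ push-through convention; inserting $V_h^{-1}$ is what would give $B_\omega(g,h)^{-1}$.
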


\begin{proof}
Pushing $U_g=u_g^{\otimes L}$ through every tensor in Eq.~\eqref{eq:s_mps0} produces cancelling pairs of $V_g$ and $V_g^\dagger$, leaving
\begin{equation}
U_g|\Psi_0\rangle=e^{\ii L\theta_g}|\Psi_0\rangle.
\end{equation}
For the flux state, the same telescoping leaves
\begin{equation}
V_gV_hV_g^\dagger=B_\omega(g,h)V_h,
\end{equation}
so
\begin{equation}
U_g|\Psi_h\rangle
=
e^{\ii L\theta_g}B_\omega(g,h)|\Psi_h\rangle.
\end{equation}
Taking the ratio of the twisted and ordinary dominant-state $g$ charges gives Eq.~\eqref{eq:s_mps_theorem}.
\end{proof}

The same argument applies to a normal site-dependent MPS with a consistent virtual symmetry class and tensors satisfying
\begin{equation}
\sum_j(u_g)_{ij}A_x^j
=
e^{\ii\theta_{g,x}}
V_{g,x-1}^\dagger A_x^iV_{g,x},
\label{eq:s_nonuniform_mps}
\end{equation}
For such tensors, the virtual matrices telescope around the ring. Provided the flux insertion is nonzero and uses the virtual representation at the closing bond, the ordinary charge is $\prod_xe^{\ii\theta_{g,x}}$ and the canonical flux insertion leaves the same projective commutator. For a record-dependent dominant effect-state MPS, this theorem applies when the microscopic $h$-twisted branch realizes the canonical virtual insertion with the background-field defect choice. In the monitored cluster model below, explicit microscopic fixed-point deformations and readout-character flow establish this identification directly.

\section{Canonical flux selection and defect decoration}

The canonical background flux is the undecorated seam defect. Decorating the seam by a zero-dimensional degree of freedom transforming in a linear representation $\lambda(g)$ preserves the bulk MPS and its cohomology class while multiplying the total flux-state charge:
\begin{equation}
D_{g,0}^{\rm eig*}D_{g,h}^{\rm eig}
=
\lambda(g)B_\omega(g,h).
\label{eq:s_defectdecoration}
\end{equation}
Likewise, a local defect level may cross another level without closing the periodic bulk gap. The selected twisted ground-state or Lyapunov charge can then change while the bulk remains in the same phase.

The canonical flux-charge relation is fixed by (i) the undecorated background-field seam and (ii) spectral continuity of the relevant defect branch. The microscopic holonomy in Sec.~S3 fixes the first condition, while the explicit fixed-point deformations and readout-character theorem below establish the second.

\section{Explicit cluster and trivial fixed-point deformations}

A Born record consists of events
\begin{equation}
s=\{(b_t,i_t,\kappa_t,r_t)\}_{t=1}^{M},
\label{eq:s_record}
\end{equation}
where $b_t$ specifies whether an $X_iX_{i+2}$ interaction was applied, $i_t$ is the measured site, $\kappa_t\in\{X,K\}$ is the measurement type, and $r_t=\pm1$ is the outcome. The physical point is $\lambda=0$.

Along the \emph{cluster path}, we use
\begin{equation}
\beta_X(\lambda)=(1-\lambda)\beta,\qquad
\beta_K(\lambda)=\beta,\qquad
J(\lambda)=(1-\lambda)J,
\label{eq:s_clusterpath}
\end{equation}
and append the symmetry-preserving pinning layer
\begin{equation}
F_{K,\lambda}^{(a)}
=
\prod_i
\exp\!\left[\frac{\lambda\epsilon}{2}K_i^{(a)}\right].
\label{eq:s_clusterpin}
\end{equation}
Along the \emph{trivial path},
\begin{equation}
\beta_X(\lambda)=\beta,\qquad
\beta_K(\lambda)=(1-\lambda)\beta,\qquad
J(\lambda)=(1-\lambda)J,
\label{eq:s_trivialpath}
\end{equation}
with
\begin{equation}
F_{X,\lambda}
=
\prod_i
\exp\!\left[\frac{\lambda\epsilon}{2}X_i\right].
\label{eq:s_xpin}
\end{equation}
The proof path includes this pinning for $\lambda>0$ and coincides exactly with the physical instrument at $\lambda=0$.

At the cluster endpoint, let
\begin{equation}
R_i(s)=\sum_{t:\kappa_t=K,\ i_t=i}r_t.
\end{equation}
All remaining filters commute, and the effect is
\begin{equation}
E_{s,1}^{(a)}
\propto
\exp\!\left[
\sum_i(\beta R_i+\epsilon)K_i^{(a)}
\right].
\label{eq:s_clusterendpoint}
\end{equation}
Choosing $0<\epsilon<\beta$ makes every coefficient nonzero because $R_i$ is an integer. The maximizing eigenvalue of each $K_i^{(a)}$ is therefore fixed by the same sign $\operatorname{sgn}(\beta R_i+\epsilon)$ in the ordinary and twisted problems. The sublattice charge products satisfy
\begin{equation}
U_A=\prod_xK_{A,x}^{(0)}
=
-\prod_xK_{A,x}^{(h)}.
\label{eq:s_endpointminus}
\end{equation}
Hence every record obeys
\begin{equation}
D_{A,0}^{\rm eig}(s,1)D_{A,h}^{\rm eig}(s,1)=-1.
\label{eq:s_clusterB}
\end{equation}

At the trivial endpoint only commuting $X_i$ filters remain. They are insensitive to the $U_B$ holonomy:
\begin{equation}
E_{s,1}^{(h)}=E_{s,1}^{(0)},
\end{equation}
and therefore
\begin{equation}
D_{A,0}^{\rm eig}(s,1)D_{A,h}^{\rm eig}(s,1)=+1.
\label{eq:s_trivialB}
\end{equation}
The two deformations realize the ``defect-compatible path'' explicitly at the level of individual records.

\section{Readout-character spectral-flow theorem}

Let
\begin{equation}
E^{(a)}(\lambda)
=
\bigoplus_{\alpha\in\Chat}E_\alpha^{(a)}(\lambda),
\qquad
\Lambda_\alpha^{(a)}(\lambda)
=
\lambda_{\max}(E_\alpha^{(a)}(\lambda)).
\label{eq:s_irrepblocks}
\end{equation}
For a fixed readout element $g\in G$, sectors that agree on $g$ are operationally indistinguishable to that charge measurement. Define
\begin{equation}
\Lambda_z^{(a,g)}(\lambda)
=
\max_{\alpha:\alpha(g)=z}
\Lambda_\alpha^{(a)}(\lambda),
\qquad
D_{g,a}^{\rm eig}(\lambda)
=
\arg\max_z\Lambda_z^{(a,g)}(\lambda).
\label{eq:s_groupedweights}
\end{equation}

\begin{theorem}[Readout-character spectral flow]
Assume $E^{(a)}(\lambda)$ depends continuously on $\lambda$. The decoded value $D_{g,a}^{\rm eig}$ can change only at a crossing
\begin{equation}
\Lambda_z^{(a,g)}(\lambda_c)
=
\Lambda_{z'}^{(a,g)}(\lambda_c),
\qquad z\ne z'.
\label{eq:s_character_crossing}
\end{equation}
A crossing between irreducible sectors $\alpha,\alpha'$ with $\alpha(g)=\alpha'(g)$ cannot change the $g$-charge response.
\end{theorem}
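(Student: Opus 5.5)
The argument rests on continuity of the grouped maxima together with the discreteness of the character values. First, I will show that each $\Lambda_\alpha^{(a)}(\lambda)$ is continuous in $\lambda$. Strong symmetry, Eq.~\eqref{eq:s_strong}, holds along the path because the deformations and pinnings in Eqs.~\eqref{eq:s_clusterpin}--\eqref{eq:s_xpin} are symmetric. The sector blocks can therefore be written as $E_\alpha^{(a)}(\lambda)=P_\alpha E^{(a)}(\lambda)P_\alpha$, using the $\lambda$-independent projectors of Eq.~\eqref{eq:s_projectors}, so each block is continuous in $\lambda$. The largest eigenvalue of a Hermitian matrix is $1$-Lipschitz in operator norm (Weyl), so $\Lambda_\alpha^{(a)}$ is continuous. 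Since $\widehat G$ is finite, each grouped weight $\Lambda_z^{(a,g)}$ is a maximum of finitely many continuous functions and is also continuous. Only the finitely many values $z\in\{\alpha(g):\alpha\in\widehat G\}$ occur.

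Next comes the local constancy step. Suppose no crossing of the form Eq.~\eqref{eq:s_character_crossing} occurs at $\lambda_0$. Then the maximizing value $z_*$ is unique at $\lambda_0$, with a strict gap $\delta=\Lambda_{z_*}^{(a,g)}(\lambda_0)-\max_{z\neq z_*}\Lambda_z^{(a,g)}(\lambda_0)>0$. By continuity, on a neighborhood of $\lambda_0$ every grouped weight moves by less than $\delta/2$, so $z_*$ stays the unique argmax and $D_{g,a}^{\rm eig}$ is constant there. Taking the contrapositive, any point where $D_{g,a}^{\rm eig}$ fails to be locally constant must be a point of equality between two distinct grouped weights. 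The same reasoning covers one-sided changes at the path endpoints. Ties are resolved by the fixed convention, so the label is defined everywhere, but a tie is itself a crossing in the sense of Eq.~\eqref{eq:s_character_crossing}.

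For the spectator statement, suppose $\Lambda_\alpha^{(a)}(\lambda_c)=\Lambda_{\alpha'}^{(a)}(\lambda_c)$ with $\alpha(g)=\alpha'(g)=z$. Both sectors enter the same maximum defining $\Lambda_z^{(a,g)}$, and the maximum of continuous functions is continuous even when the active argument switches. So $\Lambda_z^{(a,g)}$ has no discontinuity at $\lambda_c$, and the exchange produces no equality with any $\Lambda_{z'}^{(a,g)}$ for $z'\neq z$ unless one happens to occur independently. By the previous step, $D_{g,a}^{\rm eig}$ is unchanged. This is just the observation that $D_{g,a}^{\rm eig}$ depends on $E^{(a)}$ only through the grouped weights.

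The main subtlety is the precise meaning of ``can change only at a crossing'' when crossings accumulate or when two grouped weights touch without exchanging order. I will state the result as a necessary condition: the set where $D_{g,a}^{\rm eig}$ is not locally constant is contained in the closed set of character-distinct degeneracies. I will not claim that every such degeneracy changes the label. The parity formula Eq.~\eqref{eq:flow_parity} then follows under the additional assumption of finitely many isolated crossings.
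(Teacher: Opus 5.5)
Your proposal is correct and takes the same approach as the paper's proof: each sector maximum $\Lambda_\alpha^{(a)}$ is continuous, so the finitely many grouped maxima are continuous, and the argmax is locally constant wherever the top grouped weight is strictly separated; an exchange with $\alpha(g)=\alpha'(g)$ stays inside one group and cannot alter the grouped label. Your additions (the $\lambda$-independent projectors $P_\alpha$, the Weyl Lipschitz bound, the explicit $\delta/2$ neighborhood, and reading ``can change only at a crossing'' as a necessary condition, with isolated crossings assumed for the parity formula) make explicit steps that the paper's terser argument leaves implicit.
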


\begin{proof}
Each $\Lambda_\alpha^{(a)}$ is continuous as the largest eigenvalue of a continuous finite-dimensional Hermitian family. The grouped maximum of finitely many continuous functions is continuous. If one grouped weight is strictly larger than all others on an interval, its label is constant. Therefore a change of $D_{g,a}^{\rm eig}$ requires equality of two grouped weights carrying distinct character values. If $\alpha(g)=\alpha'(g)$, exchanging the winning irreducible sector leaves the grouped label unchanged.
\end{proof}

For a $\mathbb Z_2$ readout with isolated crossings, let $N_{g,a}$ count crossings that change the readout character, modulo two. Then
\begin{equation}
D_{g,a}^{\rm eig}(0)
=
D_{g,a}^{\rm eig}(1)(-1)^{N_{g,a}}.
\label{eq:s_parity}
\end{equation}
Combining ordinary and twisted paths,
\begin{equation}
D_{g,0}^{\rm eig}(0)^*D_{g,h}^{\rm eig}(0)
=
B_{\rm end}
(-1)^{N_{g,0}+N_{g,h}},
\label{eq:s_productparity}
\end{equation}
where $B_{\rm end}=-1$ on the cluster path and $+1$ on the trivial path. More generally, the protection-failure event is defined directly by the endpoint mismatch $D_{g,0}^{\rm eig}(0)^*D_{g,h}^{\rm eig}(0)\ne B_{\rm end}$ under the fixed tie-breaking convention. This definition also includes a readout-character gap closing exactly at an endpoint. Its probability is denoted $P_{\rm odd}^{(g)}$ in the $\mathbb Z_2$ case.

The theorem identifies three distinct finite-size gaps. The \emph{readout-character gap} separates the grouped maxima in Eq.~\eqref{eq:s_groupedweights} and protects the response. A \emph{spectator gap} separates irreducible sectors within the winning character group and may close without changing the selected readout character. Finally, the \emph{complete-sector internal gap} is the top-two gap after all global symmetry labels have been fixed; it diagnoses state continuity inside one full block.

\section{Joint-sector validation and spectator crossing}

For the monitored cluster model, the complete sectors are labeled by $(q_A,q_B)$. Let
\begin{equation}
\Lambda_{q_A,q_B}^{(a)}(\lambda)
\end{equation}
be the largest effect eigenvalue in a joint sector. The response reads $q_A$, so
\begin{equation}
\Lambda_{q_A}^{(a,A)}
=
\max_{q_B=\pm1}\Lambda_{q_A,q_B}^{(a)}.
\label{eq:s_groupqA}
\end{equation}
The three gaps reported numerically are
\begin{align}
\Delta_A^{(a)}
&=
\frac{1}{2T}
\left|
\log\frac{\Lambda_+^{(a,A)}}{\Lambda_-^{(a,A)}}
\right|,
\label{eq:s_gapA}\\
\Delta_B^{\rm spec}
&=
\frac{1}{2T}
\left|
\log\frac{\Lambda_{q_A^*,+}^{(a)}}{\Lambda_{q_A^*,-}^{(a)}}
\right|,
\label{eq:s_gapB}\\
\Delta_{\rm int}^{(a)}
&=
\frac{1}{2T}
\log\frac{\Lambda_{1;q_A^*,q_B^*}^{(a)}}
{\Lambda_{2;q_A^*,q_B^*}^{(a)}}.
\label{eq:s_gapint}
\end{align}

An SPT path exhibiting the spectator exchange resolves the global top-two closing quantitatively. In the ordinary branch, the unresolved top-two gap reaches $9.6\times10^{-9}$ per sweep at $\lambda=0.938599\ldots$, while the readout-character gap is $0.674$ per sweep. In the canonically twisted branch, the corresponding minimum is $1.3\times10^{-8}$ per sweep at $\lambda=0.939343\ldots$, with readout-character gap $0.673$ per sweep. Joint $(q_A,q_B)$ spectroscopy on the same neighborhood shows that the near-degenerate levels have identical $q_A$ and opposite spectator charge $q_B$; on the refined joint grid the complete-sector internal gap remains approximately $0.542$ per sweep. Thus the vanishing global gap is a spectator-sector exchange rather than a change of the readout character.

Independent effect-state diagnostics give the same conclusion. Across the refined ordinary-branch crossing window, the cluster-string magnitude deviates from unity by at most $1.6\times10^{-6}$ and the fourfold entanglement-spectrum flatness by at most $5.8\times10^{-6}$. The dominant effect state therefore remains continuously in the same bulk SPT structure while the spectator sectors exchange order.

Table~\ref{tab:s_jointsummary} summarizes complete phase-matched-path validation at $L=8$ using twenty independent records per deep phase with deterministic seeds. All forty physical-to-fixed-point endpoint products equal the expected response. The nine-point grid detects at least one $q_A$ branch-label change in $9/20$ SPT records and $8/20$ trivial records. Eqs.~\eqref{eq:s_parity}--\eqref{eq:s_productparity} constrain the corresponding net endpoint flow, which remains topologically correct for all forty paths.

\begin{table}[t]
\caption{\textbf{Joint-sector validation on the phase-matched paths.}
``$q_A$ branch flow'' and ``spectator flow'' count records with at least one label change on the nine-point deformation grid. The gap columns give the 10th percentile of the minimum gap over each complete path across twenty records.}
\label{tab:s_jointsummary}
\begin{ruledtabular}
\begin{tabular}{lcccccc}
phase & path & endpoint & $q_A$ branch flow & spectator flow &
$\Delta_A^{p10}$ & $\Delta_{\rm int}^{p10}$\\
\hline
SPT, $p_X=0.25$ & cluster & $20/20$ & $9/20$ & $11/20$ & 0.0246 & 0.250\\
trivial, $p_X=0.75$ & $X$ & $20/20$ & $8/20$ & $6/20$ & 0.0250 & 0.450
\end{tabular}
\end{ruledtabular}
\end{table}

Across the forty phase-matched records summarized in Table~\ref{tab:s_jointsummary}, the endpoint response is always correct and the complete-sector internal gap remains resolved. Branch-character flow is common. In the SPT sample, one record displays a transient reversal of the ordinary--twisted product on the sampled grid and then returns to the correct endpoint value, while the other nineteen keep the product fixed at every sampled point. The trivial sample keeps the product fixed at every sampled point. Spectator switching is also common and can occur at a much smaller gap without affecting the response. These observations directly realize the distinction between local branch crossings, net readout-character flow, and spectator flow in Eqs.~\eqref{eq:s_gapA}--\eqref{eq:s_gapint}.

\section{Trace-character and eigen-character labels}

The randomized identity naturally contains grouped trace weights, whereas the MPS theorem is formulated for the dominant eigenvector. For a fixed readout element $g$, define
\begin{equation}
Z_z^{(a,g)}=\sum_{\alpha:\alpha(g)=z}\Tr E_\alpha^{(a)},
\qquad
\Lambda_z^{(a,g)}=\max_{\alpha:\alpha(g)=z}\lambda_{\max}(E_\alpha^{(a)}).
\label{eq:s_trace_eig_grouped}
\end{equation}
The corresponding labels are
\begin{equation}
D_{g,a}^{\rm tr}=\arg\max_z Z_z^{(a,g)},
\qquad
D_{g,a}^{\rm eig}=\arg\max_z\Lambda_z^{(a,g)}.
\label{eq:s_trace_eig_labels}
\end{equation}
At finite time these labels can differ.

\begin{proposition}[Finite-size trace/eigen character certificate]
Let $z_*$ maximize $\Lambda_z$, and let $d_z$ be the dimension of the complete Hilbert subspace with character value $z$. If
\begin{equation}
\Lambda_{z_*}>d_z\Lambda_z
\qquad\text{for every }z\ne z_*,
\label{eq:s_traceeigencondition}
\end{equation}
then $z_*$ also uniquely maximizes the grouped trace weight. Moreover,
\begin{equation}
\eta_{\rm tr}^{(g)}
\le
\sum_{z\ne z_*}d_z\frac{\Lambda_z}{\Lambda_{z_*}}.
\label{eq:s_traceeigeneta}
\end{equation}
\end{proposition}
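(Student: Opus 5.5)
The plan is to compare the two grouped quantities in Eq.~\eqref{eq:s_trace_eig_grouped} through two elementary eigenvalue bounds for positive semidefinite blocks, and then read off both claims. Fix the branch $a$ and the readout element $g$, and suppress these labels. For each character value $z$, let $\mathcal H_z=\bigoplus_{\alpha:\alpha(g)=z}\mathcal H_\alpha$ be the complete subspace with $g$-character $z$, so that $d_z=\dim\mathcal H_z$. Let $E_z=\bigoplus_{\alpha:\alpha(g)=z}E_\alpha$ be the restriction of the effect to $\mathcal H_z$. By strong symmetry, Eq.~\eqref{eq:s_strong}, the effect is block diagonal, so $Z_z=\Tr E_z$ and $\Lambda_z=\lambda_{\max}(E_z)$.

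Step one gives the upper bound. Since $E_z\ge 0$, every eigenvalue of $E_z$ lies in $[0,\Lambda_z]$. Summing the $d_z$ eigenvalues gives
\begin{equation}
Z_z\le d_z\Lambda_z .
\end{equation}
Step two gives the lower bound for the winning group. Positivity of $E_{z_*}$ implies that its trace is at least its largest eigenvalue, so $Z_{z_*}\ge\Lambda_{z_*}$. Chaining these with the hypothesis Eq.~\eqref{eq:s_traceeigencondition} gives, for every $z\ne z_*$,
\begin{equation}
Z_z\le d_z\Lambda_z<\Lambda_{z_*}\le Z_{z_*}.
\end{equation}
Hence $z_*$ is the unique maximizer of the grouped trace weight. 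In particular $D^{\rm tr}_{g,a}=D^{\rm eig}_{g,a}=z_*$, independently of the tie-breaking convention.

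For the bound Eq.~\eqref{eq:s_traceeigeneta}, I take $\eta_{\rm tr}^{(g)}$ to be the trace weight outside the winning group, as in Eq.~\eqref{eq:s_grouped_eta}:
\begin{equation}
\eta_{\rm tr}^{(g)}=1-\frac{Z_{z_*}}{\sum_z Z_z}=\frac{\sum_{z\ne z_*}Z_z}{\sum_z Z_z}.
\end{equation}
The denominator satisfies $\sum_z Z_z\ge Z_{z_*}\ge\Lambda_{z_*}>0$; positivity of $\Lambda_{z_*}$ follows from the strict hypothesis, since $d_z\Lambda_z\ge 0$. For the numerator I apply the step-one bound term by term. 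Together these give
\begin{equation}
\eta_{\rm tr}^{(g)}\le\sum_{z\ne z_*}\frac{d_z\Lambda_z}{\Lambda_{z_*}}.
\end{equation}

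There is no real obstacle here. The only points to watch are bookkeeping. First, $d_z$ must be the full dimension of the character-$z$ subspace, including all spectator sectors, and not the dimension of a single irreducible block. Second, the maximal grouped eigenvalue, taken as a maximum over $\alpha$, coincides with $\lambda_{\max}$ of the direct sum $E_z$; this holds because the blocks are orthogonal.
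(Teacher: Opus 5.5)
Your proof is correct and follows the paper's argument. You use the same pair of bounds, $Z_{z_*}\ge\Lambda_{z_*}$ and $Z_z\le d_z\Lambda_z$, to get uniqueness, and then bound the numerator and denominator of $\eta_{\rm tr}^{(g)}$; you additionally make explicit that $\Lambda_{z_*}>0$ and that the grouped maximum equals $\lambda_{\max}$ of the direct-sum block, both of which the paper leaves implicit.
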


\begin{proof}
The winning grouped trace satisfies $Z_{z_*}\ge\Lambda_{z_*}$, whereas $Z_z\le d_z\Lambda_z$. Equation~\eqref{eq:s_traceeigencondition} therefore implies $Z_{z_*}>Z_z$. The posterior weight outside the winning character obeys
\begin{equation}
\eta_{\rm tr}^{(g)}
=
\frac{\sum_{z\ne z_*}Z_z}{Z_{z_*}+\sum_{z\ne z_*}Z_z}
\le
\sum_{z\ne z_*}\frac{Z_z}{Z_{z_*}}
\le
\sum_{z\ne z_*}d_z\frac{\Lambda_z}{\Lambda_{z_*}}.
\end{equation}
\end{proof}

The dimension-times-leading-eigenvalue estimate gives a sufficient certificate. The simulations also measure the empirical probability $P_{\rm tr/eig}$ that the trace and eigen readout characters disagree on at least one of the ordinary and twisted branches.

\section{Finite-time error budget}

Let $\widetilde D_{g,h}$ be the decoder actually used in data processing and define
\begin{equation}
\widetilde W=\E[\chi_n(g)^*\widetilde D_{g,h}(s)].
\end{equation}
Let $B_\omega$ be the fixed-point projective commutator. Introduce the character-flow distance
\begin{equation}
\varepsilon_{\rm flow}^{(g)}
=
\E\left|
D_{g,0}^{\rm eig*}D_{g,h}^{\rm eig}-B_\omega
\right|.
\label{eq:s_flow_distance}
\end{equation}
For a $\mathbb Z_2$ response, $\varepsilon_{\rm flow}^{(g)}=2P_{\rm odd}^{(g)}$.

\begin{theorem}[Finite-time error bound]
For unit-modulus character outputs,
\begin{equation}
|\widetilde W-B_\omega|
\le
2\E\eta_0^{(g)}
+\varepsilon_{\rm flow}^{(g)}
+2P_{\rm tr/eig}
+2P_{\rm dec}.
\label{eq:s_fullerror_general}
\end{equation}
In the $\mathbb Z_2$ model this becomes
\begin{equation}
|\widetilde W-B_\omega|
\le
2\E\eta_0^{(g)}
+2P_{\rm odd}^{(g)}
+2P_{\rm tr/eig}
+2P_{\rm dec}.
\label{eq:s_fullerror}
\end{equation}
Here $P_{\rm dec}$ is the probability that $\widetilde D_{g,h}$ disagrees with the exact trace-character twist decoder, and $P_{\rm tr/eig}$ is the probability of a trace/eigen readout-character disagreement on at least one branch.
\end{theorem}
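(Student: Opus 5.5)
The plan is a telescoping triangle inequality. Starting from the randomized-response identity, Eq.~\eqref{eq:s_identity}, I write $\widetilde W=\sum_s p_0(s)m_g^{(0)}(s)\widetilde D_{g,h}(s)$. The same Born-rule computation holds with $D^{\rm tr}_{g,h}$ replaced by any deterministic function of $s$. I then insert the chain
\begin{equation}
m_g^{(0)}\widetilde D_{g,h}
\to m_g^{(0)}D^{\rm tr}_{g,h}
\to D^{\rm tr*}_{g,0}D^{\rm tr}_{g,h}
\to D^{\rm eig*}_{g,0}D^{\rm eig}_{g,h}
\to B_\omega ,
\end{equation}
and bound each difference pointwise in $s$ before averaging with $p_0(s)$. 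Since $\sum_s p_0(s)=1$, each pointwise bound becomes an expectation. Throughout, $\E$ and all probabilities are taken over $p_0$, the record distribution induced by uniformly random inputs.

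The four increments are handled as follows.

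\emph{Decoder term.} Since $|m_g^{(0)}|\le 1$ and both decoder outputs have unit modulus, $|m_g^{(0)}(\widetilde D_{g,h}-D^{\rm tr}_{g,h})|\le 2\,\mathbf 1[\widetilde D_{g,h}\ne D^{\rm tr}_{g,h}]$. Averaging gives $2P_{\rm dec}$.

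\emph{Sharpening term.} Here $|(m_g^{(0)}-D^{\rm tr*}_{g,0})D^{\rm tr}_{g,h}|\le 2\eta_0^{(g)}(s)$ by the readout-character sharpening lemma, Eq.~\eqref{eq:s_sharp}. Its $D^{\rm tr}_{g,0}$ is exactly the grouped-trace argmax entering $m_g^{(0)}$, and averaging gives $2\E\eta_0^{(g)}$. Ties, where $z_0$ is not unique, need a convention: I would let $D^{\rm tr}_{g,0}$ be the tie-broken maximizer. The lemma's triangle-inequality proof never uses uniqueness, so the bound survives.

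\emph{Trace/eigen term.} The product $D^{\rm tr*}_{g,0}D^{\rm tr}_{g,h}$ differs from $D^{\rm eig*}_{g,0}D^{\rm eig}_{g,h}$ only if at least one branch has a trace/eigen disagreement. Both products have unit modulus, so their difference is at most $2$ on that event and zero otherwise. This gives $2P_{\rm tr/eig}$.

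\emph{Flow term.} The last increment averages to $\E|D^{\rm eig*}_{g,0}D^{\rm eig}_{g,h}-B_\omega|=\varepsilon^{(g)}_{\rm flow}$ by definition, Eq.~\eqref{eq:s_flow_distance}.

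For the $\mathbb Z_2$ specialization, the product and $B_\omega$ both lie in $\{\pm1\}$. Hence $|D^{\rm eig*}_{g,0}D^{\rm eig}_{g,h}-B_\omega|=2\,\mathbf 1[\text{mismatch}]$, which gives $\varepsilon_{\rm flow}=2P^{(g)}_{\rm odd}$. This uses the paper's definition of $P^{(g)}_{\rm odd}$ as the endpoint-mismatch probability, with $B_\omega$ identified with $B_{\rm end}$ on the phase-matched path via Eq.~\eqref{eq:s_productparity}. Summing the four bounds gives both Eq.~\eqref{eq:s_fullerror_general} and Eq.~\eqref{eq:s_fullerror}.

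I expect the main obstacle to be bookkeeping rather than analysis. The conjugation conventions must align: $m_g$ is built from $\Tr[U_g^\dagger E]$ and so approximates $z_0^*$, which has to be matched to $D^{*}_{g,0}$ in the flux-charge product. The $p_0$-weighting of the error probabilities must also be stated consistently with how $P_{\rm dec}$, $P_{\rm tr/eig}$ and $P_{\rm odd}$ are defined. I would fix these conventions explicitly at the start of the proof.
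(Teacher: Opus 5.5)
Your proposal is correct and follows the paper's proof essentially verbatim. It uses the same telescoping chain $\widetilde W\to W_{\rm tr}\to\E[D^{\rm tr*}_{g,0}D^{\rm tr}_{g,h}]\to\E[D^{\rm eig*}_{g,0}D^{\rm eig}_{g,h}]\to B_\omega$, bounding the increments by $2P_{\rm dec}$, the sharpening lemma ($2\E\eta_0^{(g)}$), $2P_{\rm tr/eig}$, and $\varepsilon^{(g)}_{\rm flow}$ (which equals $2P^{(g)}_{\rm odd}$ for $\mathbb Z_2$); your pointwise-then-average presentation and explicit remarks on $p_0$-weighting and tie-breaking are only slightly more careful bookkeeping of that argument.
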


\begin{proof}
Insert successively the exact trace-decoder response, the hard trace-character product, and the hard eigen-character product:
\begin{align}
|\widetilde W-B_\omega|
&\le
|\widetilde W-W_{\rm tr}|
+
|W_{\rm tr}-\E[D_{g,0}^{\rm tr*}D_{g,h}^{\rm tr}]| \notag\\
&\quad+
|\E[D_{g,0}^{\rm tr*}D_{g,h}^{\rm tr}]
-\E[D_{g,0}^{\rm eig*}D_{g,h}^{\rm eig}]| \notag\\
&\quad+
|\E[D_{g,0}^{\rm eig*}D_{g,h}^{\rm eig}]-B_\omega|.
\label{eq:s_telescoping_error}
\end{align}
The first term is at most $2P_{\rm dec}$. The grouped sharpening lemma bounds the second by $2\E\eta_0^{(g)}$. The third is at most $2P_{\rm tr/eig}$. The final term is at most $\varepsilon_{\rm flow}^{(g)}$ by the triangle inequality, with equality to $2P_{\rm odd}^{(g)}$ for binary outputs. Summing proves Eqs.~\eqref{eq:s_fullerror_general} and \eqref{eq:s_fullerror}.
\end{proof}

The bound is insensitive to spectator crossings because they preserve the selected value of the readout character. For a $\mathbb Z_2$ invariant, any certified total error below one fixes the sign. For a root-of-unity response of order $N$, nearest-root decoding is unique whenever
\begin{equation}
|\widetilde W-B_\omega|<\sin(\pi/N).
\label{eq:s_rootthreshold}
\end{equation}

A direct finite-time comparison at $L=8$ is summarized in Table~\ref{tab:s_soft_hard}. The four-probe stochastic trace estimate provides a numerical consistency check of the analytic randomized-response identity. The soft trace response, hard eigen-character product, and Gaussian-drop response all retain the correct deep-phase sign, while their differences quantify the distinct finite-time contributions.

\begin{table}[t]
\caption{\textbf{Finite-time trace/eigen/decoder comparison at $L=8$.}
The soft response is the four-probe trace estimate $W_{\rm probe}$, the hard column is $\langle D_{g,0}^{\rm eig*}D_{g,h}^{\rm eig}\rangle$, and $\widetilde W$ is the Gaussian-drop response. Uncertainties shown for $W_{\rm probe}$ and $\widetilde W$ are standard errors over 100 interacting records.}
\label{tab:s_soft_hard}
\begin{ruledtabular}
\begin{tabular}{ccccc}
$p_X$ & phase & $W_{\rm probe}$ & hard eigen product & $\widetilde W$\\
\hline
0.25 & SPT & $-0.904\pm0.036$ & $-0.94$ & $-0.86\pm0.051$\\
0.75 & trivial & $+0.888\pm0.033$ & $+0.98$ & $+0.90\pm0.044$
\end{tabular}
\end{ruledtabular}
\end{table}

The four terms in Eq.~\eqref{eq:s_fullerror} can be resolved separately on independent full interacting-effect calculations. Table~\ref{tab:s_errorbudget} reports the resulting empirical decomposition. Here $P_{\rm dec}$ is obtained by comparing the Gaussian-drop twisted label with the exact trace-character twisted label, while $P_{\rm tr/eig}$ isolates trace/eigen disagreement. The entries are finite-sample estimates of the terms in the exact bound.

\begin{table}[t]
\caption{\textbf{Empirical decomposition of the finite-time error budget.}
The last column is the sum of the four preceding contributions in Eq.~\eqref{eq:s_fullerror}.}
\label{tab:s_errorbudget}
\begin{ruledtabular}
\begin{tabular}{ccccccc}
$L$ & phase & $2\langle\eta_0\rangle$ & $2P_{\rm odd}$ & $2P_{\rm tr/eig}$ & $2P_{\rm dec}$ & sum\\
\hline
8 & SPT & 0.039 & 0.060 & 0.040 & 0.040 & 0.179\\
8 & trivial & 0.072 & 0.020 & 0.140 & 0.120 & 0.352\\
10 & SPT & 0.022 & 0 & 0.040 & 0.040 & 0.102\\
10 & trivial & 0.056 & 0 & 0 & 0 & 0.056\\
12 & SPT & $4.3\!\times\!10^{-4}$ & 0 & 0 & 0 & $4.3\!\times\!10^{-4}$\\
12 & trivial & 0.009 & 0 & 0 & 0 & 0.009
\end{tabular}
\end{ruledtabular}
\end{table}

At $L=8$ the empirical sum lies well below the binary sign-separation scale of one in both deep phases, and the larger-size calculations give still smaller sums. These values provide a finite-sample numerical consistency test of the error decomposition; Table~\ref{tab:s_binomial_ci} supplies confidence intervals for the principal binary events.

\section{Finite-Abelian cohomology reconstruction}

Let
\begin{equation}
G=\prod_{i=1}^{r}\mathbb Z_{N_i},
\label{eq:s_groupdecomp}
\end{equation}
with generators $e_i$. For finite Abelian $G$, bosonic one-dimensional SPT classes are classified by alternating bicharacters, and
\begin{equation}
H^2(G,U(1))
\simeq
\prod_{i<j}\mathbb Z_{\gcd(N_i,N_j)}.
\label{eq:s_H2}
\end{equation}
The generator-pair responses satisfy
\begin{equation}
B_\omega(e_i,e_j)^{\gcd(N_i,N_j)}=1,
\qquad
B_\omega(e_j,e_i)=B_\omega(e_i,e_j)^{-1}.
\label{eq:s_genpairs}
\end{equation}
Consequently, the set
\begin{equation}
\{B_\omega(e_i,e_j):i<j\}
\label{eq:s_reconstructset}
\end{equation}
determines the complete cohomology class. At most $r(r-1)/2$ response settings are required, with factors having $\gcd(N_i,N_j)=1$ omitted.

For $G=\mathbb Z_N\times\mathbb Z_N$, take virtual Weyl operators $V_g=X$ and $V_h=Z$ obeying
\begin{equation}
XZ=e^{-2\pi\ii/N}ZX.
\end{equation}
Then
\begin{equation}
B_\omega(g,h)=e^{-2\pi\ii/N}.
\end{equation}
The Weyl-algebra construction applies for arbitrary $N$. For example, $G=\mathbb Z_3\times\mathbb Z_3$ has $H^2(G,U(1))\simeq\mathbb Z_3$, and the three values $B_\omega(g,h)=1,e^{\pm2\pi\ii/3}$ distinguish its three cohomology classes directly.

\section{Exact Gaussian limit and Gaussian-drop decoder}

The weak cluster measurements are Gaussian after a Jordan-Wigner transformation. We use the convention
\begin{equation}
X_i=\ii\gamma_{2i}\gamma_{2i+1},
\qquad
K_i=Z_{i-1}X_iZ_{i+1}
=\ii\gamma_{2i-1}\gamma_{2i+2}
\label{eq:s_JW}
\end{equation}
away from the Jordan-Wigner seam. Seam-crossing generators acquire the total fermion parity, so the decoder evaluates both parity sectors and recombines them exactly.

Let
\begin{equation}
\Gamma_{ab}=\langle \ii\gamma_a\gamma_b\rangle
\end{equation}
be a Majorana covariance matrix. Conditioning on
\begin{equation}
M_r(P)\propto e^{r\beta P/2},
\qquad
P=\ii\gamma_a\gamma_b,
\end{equation}
updates $\Gamma$ by a rank-two rational transformation. Writing $t=\tanh\beta$, $p=\Gamma_{ab}$, and $\alpha=rt/(1+rtp)$, pairs disjoint from $(a,b)$ receive
\begin{equation}
\Gamma'
=
\Gamma
+
\alpha
\left(
\Gamma_b\Gamma_a^{\mathsf T}
-
\Gamma_a\Gamma_b^{\mathsf T}
\right),
\label{eq:s_covupdate}
\end{equation}
with the measured pair and rows sharing one Majorana updated by the corresponding normalization and attenuation factors. This operation costs $O(L^2)$.

For a record of length $M$, each Gaussian branch propagates the identity effect backward; in normalized form this starts from the zero covariance matrix of the maximally mixed state. The even- and odd-sublattice charges and the total parity are Pfaffians of covariance submatrices, evaluated in $O(L^3)$ time. The complete cost is therefore
\begin{equation}
\text{time }O(ML^2+L^3),
\qquad
\text{memory }O(L^2).
\label{eq:s_decodercomplexity}
\end{equation}

\begin{proposition}[Exact trace-character decoder at $J=0$]
For any finite even $L\ge4$ and any finite record $s$ of the periodic monitored cluster instrument at $J=0$, the ordinary and canonically twisted $\mathbb Z_2$ trace moments
\begin{equation}
m_A^{(a)}(s)=
\frac{\Tr[U_AE_s^{(a)}]}{\Tr E_s^{(a)}},
\qquad a=0,h,
\label{eq:s_exactgaussianmoment}
\end{equation}
are obtained exactly by Gaussian covariance propagation together with exact projection onto the two Jordan--Wigner total-parity sectors. Consequently, away from a trace-character tie,
\begin{equation}
D_{A,a}^{\rm tr}(s)=\operatorname{sgn}m_A^{(a)}(s)
\label{eq:s_exactgaussiandecoder}
\end{equation}
is computable in time $O(ML^2+L^3)$ and memory $O(L^2)$, without a spectral-gap assumption.
\end{proposition}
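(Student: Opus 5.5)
The proof will show that both the numerator $\Tr[U_AE_s^{(a)}]$ and the denominator $\Tr E_s^{(a)}$ reduce to Gaussian quantities once one works inside a fixed Jordan--Wigner total-parity sector $\mathcal P=\pm1$. At $J=0$ every factor of the Kraus product $K_s^{(a)}$ is a weak filter $M_r(P)=aI+rbP$, with $P$ either $X_i=\ii\gamma_{2i}\gamma_{2i+1}$ or a cluster generator. By Eq.~\eqref{eq:s_JW}, a cluster generator away from the seam is the Majorana bilinear $\ii\gamma_{2i-1}\gamma_{2i+2}$. A seam-crossing generator equals $\mathcal P$ times such a bilinear, up to a fixed sign. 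The canonical twist of Eq.~\eqref{eq:s_outcome_flip} only flips the recorded outcome $r\to-r$ for $K_{A,0}$, so it maps a Gaussian filter to a Gaussian filter.

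\textbf{Sector decomposition.} Since $\mathcal P$ commutes with every filter, the projectors $\Pi_\pm=(1\pm\mathcal P)/2$ split the effect exactly:
\begin{equation}
E_s^{(a)}=\Pi_+E_s^{(a)}\Pi_++\Pi_-E_s^{(a)}\Pi_-.
\end{equation}
In sector $\pm$ the seam operator is replaced by $\pm$ times its bilinear. Each sector effect $\Pi_\pm E_{s,\pm}^{(a)}$ is therefore $\Pi_\pm$ times a product of exponentials of quadratic Majorana operators, i.e.\ a Gaussian operator $G_\pm^{(a)}$. This gives
\begin{equation}
\Tr[U_AE_s^{(a)}]=\sum_\pm\Tr[U_A\Pi_\pm G_\pm^{(a)}]
\end{equation}
and the analogous expression for the trace. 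I then expand $\Pi_\pm=(1\pm\mathcal P)/2$. The operators $U_A$, $\mathcal P$ and $U_A\mathcal P=U_B$ (up to sign) are all products of $X_i$, and each is a product of commuting bilinears $\ii\gamma_{2i}\gamma_{2i+1}$. Hence every required quantity has the form $\Tr[G]\cdot\langle O\rangle_G$, where $O$ is a product of disjoint Majorana bilinears and $G$ is a normalized Gaussian state.

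\textbf{Covariance propagation.} Normalize $G_\pm^{(a)}$ as a Gaussian state, starting from the maximally mixed state with $\Gamma=0$. The backward propagation of the effect $K^\dagger K$ is then a sequence of conditionings on the filters $e^{\pm\beta rP/2}$, each applied twice, or equivalently with $\beta$ doubled. Each step is the rank-two update of Eq.~\eqref{eq:s_covupdate} at cost $O(L^2)$. I will also track the log normalization, which changes by $\log[(1+rt\Gamma_{ab})\cosh\beta/(2\cosh\beta)]$-type factors that are explicit in $p=\Gamma_{ab}$. This accumulates $\log\Tr G_\pm^{(a)}$ exactly. The expectation $\langle O\rangle_G$ of a product of bilinears is, by Wick's theorem, the Pfaffian of the corresponding submatrix of $\Gamma$, computed in $O(L^3)$. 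Combining the four sectors ($a=0,h$; $\pm$) with their exact weights gives $m_A^{(a)}(s)$ in closed form. Taking its sign gives $D_{A,a}^{\rm tr}$ away from ties, because for a $\mathbb Z_2$ readout the grouped trace weights are $(1\pm m_A)/2$. No spectral quantity is used anywhere, which is why no gap assumption is needed. The costs are $O(ML^2)$ for propagation and $O(L^3)$ for the Pfaffians, with $O(L^2)$ storage.

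\textbf{Main obstacle.} The delicate step is the consistency of the parity-sector treatment. First, the seam sign convention must be fixed correctly, including the even-$L$ boundary sign of the Jordan--Wigner string for the bilinear across the seam. Second, the projected traces $\Tr[\mathcal P\,G]$ must be computed robustly. For a Gaussian $G=\exp(\tfrac{i}{4}\gamma^{\mathsf T}H\gamma)$, $\Tr[\mathcal P G]$ is a Pfaffian of $\Gamma$ times $\Tr G$; it can vanish or change sign, and then the normalized propagation becomes singular when $1+rt\Gamma_{ab}\to0$. I would first verify the sign bookkeeping on $L=4$ by brute-force comparison with the full $2^L$ effect. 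I would handle the singular case by propagating $\Pi_\pm G$ directly in the parity-restricted Gaussian formalism, where the covariance remains well defined because $\Tr[\Pi_\pm G]>0$ whenever the sector is populated. Empty sectors are simply dropped, which is exact.
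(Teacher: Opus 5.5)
Your proposal is correct and follows essentially the same route as the paper. You fix the Jordan--Wigner total parity $q$, propagate the two unprojected Gaussian operators by rank-two covariance updates while tracking their trace weights, and recombine with $P_q=(1+qQ_{\rm tot})/2$ using Pfaffian moments of $U_A$, $U_B$ and $Q_{\rm tot}=U_AU_B$, exactly as in the paper's parity-recombination formula.

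The singularity in your ``main obstacle'' paragraph does not actually occur. Each $G_q$ is positive definite, so $|\Gamma_{ab}|\le 1$ and $1+rt\Gamma_{ab}\ge 1-\tanh\beta>0$. The quantity $\Tr[Q_{\rm tot}G_q]$ enters only in the final recombination, whose denominator is $2\Tr E_s^{(a)}>0$. Hence no parity-restricted propagation, and no dropping of empty sectors, is needed.
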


\begin{proof}
Fix a branch $a$ and a record $s$. For each choice $q=\pm1$ of the total fermion parity $Q_{\rm tot}$, replace every seam-crossing cluster generator by its $q$-dependent Majorana-bilinear representative. Away from the Jordan--Wigner seam, the $X$-diagonal mapping gives
\begin{equation}
X_i=\ii\gamma_{2i}\gamma_{2i+1},
\qquad
K_i=\ii\gamma_{2i-1}\gamma_{2i+2},
\end{equation}
while a seam-crossing $K_i$ differs only by the fixed factor $q$. Hence every weak-measurement filter $M_r(P)\propto e^{r\beta P/2}$ is Gaussian in each $q$ representation. The canonical twist only reverses the sign of the designated seam outcome, so it preserves the same Gaussian structure. Backward covariance propagation therefore evaluates the corresponding unprojected Gaussian traces exactly.

For fixed $a$ and $s$, let $w_q$ denote the unprojected Gaussian trace weight in the $q$ representation, and let $m_{A,q}$, $m_{B,q}$, and $m_{{\rm tot},q}$ denote the normalized Pfaffian moments of $U_A$, $U_B$, and $Q_{\rm tot}=U_AU_B$ in that representation. The physical trace in parity sector $q$ is imposed exactly by
\begin{equation}
P_q=\frac{1+qQ_{\rm tot}}{2}.
\end{equation}
Using $U_AQ_{\rm tot}=U_B$ and summing the two projected sectors gives
\begin{equation}
m_A^{(a)}=
\frac{\sum_{q=\pm1}w_q\left(m_{A,q}+q m_{B,q}\right)}
{\sum_{q=\pm1}w_q\left(1+q m_{{\rm tot},q}\right)}.
\label{eq:s_parityrecombination}
\end{equation}
The common factors of $1/2$ from $P_q$ cancel between numerator and denominator. All moments in Eq.~\eqref{eq:s_parityrecombination} are Pfaffians of submatrices of the propagated $2L\times2L$ covariance matrix. Each of the $M$ filter updates costs $O(L^2)$, the final Pfaffians cost $O(L^3)$, and the two parity representations contribute only a constant factor. This proves both exactness and the stated complexity.
\end{proof}

As a direct finite-size check, we compared Eq.~\eqref{eq:s_exactgaussianmoment} with explicit Hilbert-space effects $E_s=K_s^\dagger K_s$ for random $J=0$ records at $\beta=2$ and record length $M=\max(8,3L)$. For 20 records each at $L=4$ and $L=6$, and five records at $L=8$, the maximum absolute ordinary/twisted discrepancies were respectively $(2.49\times10^{-14},6.22\times10^{-15})$, $(5.66\times10^{-15},5.55\times10^{-15})$, and $(5.55\times10^{-16},5.55\times10^{-16})$. The agreement is at floating-point precision.

The exact $J=0$ result isolates where approximation enters for interacting trajectories. The physical Born records used in the main text include the non-Gaussian interactions $e^{-\ii JX_iX_{i+2}}$. For these records, the same covariance/Pfaffian construction defines the Gaussian-drop decoder: it propagates the measured Gaussian filters and omits only the interaction gates from decoder propagation. The same polynomial computational structure is retained, while the interacting calculations below quantify the approximation's reconstruction performance beyond the Gaussian manifold.

\FloatBarrier
\begin{table}[b]
\caption{\textbf{Finite-size Gaussian-drop reconstruction.}
One hundred Born records are used for each phase and size in the practical response data.}
\label{tab:s_gaussianscaling}
\begin{ruledtabular}
\begin{tabular}{cccccc}
$L$ & sweeps & phase & $\widetilde W$ & standard error & decoder failures\\
\hline
8 & 8 & SPT & $-0.86$ & 0.051 & 7\%\\
8 & 8 & trivial & $+0.90$ & 0.044 & 5\%\\
10 & 12 & SPT & $-0.96$ & 0.028 & 2\%\\
10 & 12 & trivial & $+0.96$ & 0.028 & 2\%\\
12 & 16 & SPT & $-0.96$ & 0.028 & 2\%\\
12 & 16 & trivial & $+1.00$ & 0.000 & 0\%
\end{tabular}
\end{ruledtabular}
\end{table}
\FloatBarrier

Two additional scans resolve how the practical reconstruction evolves across the measurement-controlled crossover and with monitoring time. Figure~\ref{fig:s_extendedscans}(a) uses 100 records at each of seven $p_X$ values for $L=8,10,12$ with $T=2L-8$; the response approaches the quantized signs away from the transition and crosses through a broad finite-size region near $p_X\simeq 0.5$. Figure~\ref{fig:s_extendedscans}(b) fixes $L=8$ at the two deep-phase cuts and uses 100 records at each monitoring time, showing progressive sharpening toward the corresponding phase value as $T$ increases. Together these scans extend the practical-decoder validation across phase and time parameters.

\begin{figure}[t]
\centering
\includegraphics[width=0.98\linewidth]{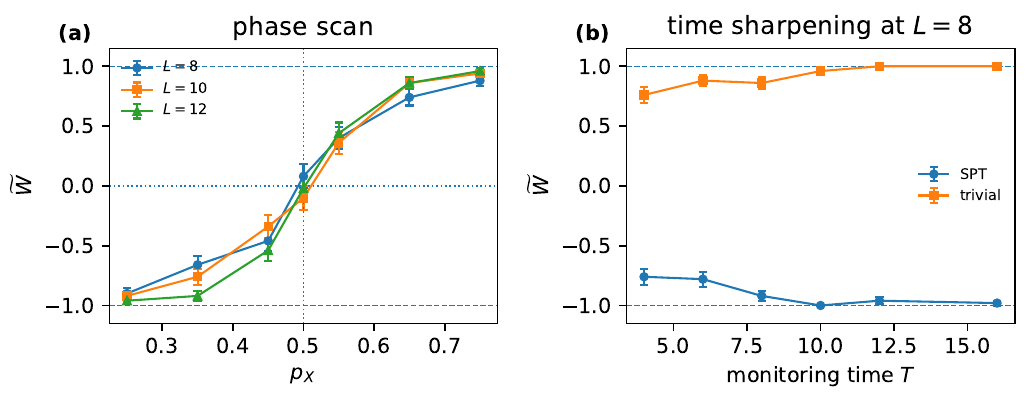}
\caption{\textbf{Extended practical-decoder validation.}
(a) Reconstructed response versus measurement-selection probability $p_X$ for three sizes, with $T=2L-8$ and 100 interacting records per point. Error bars are standard errors of the bounded single-record estimator. (b) Monitoring-time scan at $L=8$ for the deep SPT and trivial cuts, again with 100 interacting records per point. Dashed lines mark the quantized values.}
\label{fig:s_extendedscans}
\end{figure}

Because the endpoint-mismatch and decoder-error indicators are binomial events, Table~\ref{tab:s_binomial_ci} reports exact two-sided 95\% Clopper--Pearson intervals. For zero observed events, the interval gives the corresponding finite upper confidence bound.

\begin{table}[t]
\caption{\textbf{Finite-sample confidence intervals.}
Counts are followed by exact two-sided 95\% Clopper--Pearson intervals for the underlying event probability. The endpoint-mismatch counts come from independent full interacting-effect calculations; they include a readout-character gap-closing tie under the fixed tie-breaking convention. The decoder-failure counts come from the 100-record practical reconstruction data.}
\label{tab:s_binomial_ci}
\begin{ruledtabular}
\begin{tabular}{cccccc}
$L$ & phase & endpoint mismatch & 95\% CI & decoder failure & 95\% CI\\
\hline
8 & SPT & $3/100$ & $[0.62,8.52]\%$ & $7/100$ & $[2.86,13.89]\%$\\
8 & trivial & $1/100$ & $[0.03,5.45]\%$ & $5/100$ & $[1.64,11.28]\%$\\
10 & SPT & $0/50$ & $[0,7.11]\%$ & $2/100$ & $[0.24,7.04]\%$\\
10 & trivial & $0/50$ & $[0,7.11]\%$ & $2/100$ & $[0.24,7.04]\%$\\
12 & SPT & $0/20$ & $[0,16.84]\%$ & $2/100$ & $[0.24,7.04]\%$\\
12 & trivial & $0/20$ & $[0,16.84]\%$ & $0/100$ & $[0,3.62]\%$
\end{tabular}
\end{ruledtabular}
\end{table}

The implementation benchmark uses synthetic records of length $M=L(2L-8)$. Median wall-clock time grows from $2.47$ ms at $L=8$ to $0.203$ s at $L=32$, with an empirical power-law exponent near $3.14$ on the tested range. Equation~\eqref{eq:s_decodercomplexity} gives the analytical complexity, while the empirical fit characterizes the tested implementation.

\section{Numerical methods and reproducibility}

The full interacting-effect calculations use the matrix-free maps
\begin{equation}
v\mapsto K v,\qquad
v\mapsto K^\dagger v,\qquad
v\mapsto E v,
\end{equation}
together with Lanczos or power iteration inside explicit symmetry sectors. The memory cost is $O(2^L)$. For the joint $(U_A,U_B)$ calculation, each sector has dimension $2^{L-2}$.

The path operators use the scaled filters
\begin{equation}
M_r(P)/\|M_r(P)\|
\end{equation}
to prevent numerical overflow. This rescaling removes record- and path-dependent global positive scalars while leaving every eigenvalue ratio, charge label, and normalized eigenvector unchanged. The two largest eigenvalues are computed in each complete sector. Residuals are reported as
\begin{equation}
\frac{\|Ev_j-\lambda_jv_j\|}{\lambda_1}.
\label{eq:s_residual}
\end{equation}
Across the 20-record-per-phase reproduction, the largest reported residual remains below $10^{-9}$, while typical residuals are near machine precision.

Table~\ref{tab:s_configs} lists the principal configurations and random seeds. The reproducibility code is included as ancillary material with this arXiv submission and regenerates the reported numerical data from these specified parameters and seeds.

\begin{table}[t]
\caption{\textbf{Principal numerical configurations.}}
\label{tab:s_configs}
\begin{ruledtabular}
\begin{tabular}{lcccccc}
task & $L$ & sweeps & records & $p_X$ & seed & solver\\
\hline
joint paths & 8 & 8 & 20/phase & 0.25,0.75 & 202607192/194 & Lanczos\\
selected crossing & 8 & 8 & 1 & 0.25 & 202607181 & Lanczos\\
critical control & 8 & 8 & 4 & 0.50 & 202607193 & Lanczos\\
full-effect calculation & 8 & 8 & 100/phase & 0.25,0.75 & 271828 & power\\
full-effect calculation & 10 & 12 & 50/phase & 0.25,0.75 & 271828 & power\\
full-effect calculation & 12 & 16 & 20/phase & 0.25,0.75 & 13579 & power\\
practical decoder & 8,10,12 & 8,12,16 & 100/phase & 0.25,0.75 & 424242 & Gaussian\\
runtime benchmark & 8--32 & $2L-8$ & 3--5 repeats & synthetic & --- & Gaussian
\end{tabular}
\end{ruledtabular}
\end{table}

The physical parameters are
\begin{equation}
\beta=2,\qquad J=0.1,\qquad r_J=0.2.
\end{equation}
The deep SPT and trivial cuts are $p_X=0.25$ and $0.75$. The full path validation uses nine equally spaced $\lambda$ points, followed by a 33-point and continuous minimization near the weakest apparent gap. The critical control at $p_X=0.50$ displays mixed endpoint labels and reduced readout-character gaps, as expected.

For the binary response in Fig.~2(c) of the main text, error bars are exact two-sided 95\% Clopper--Pearson intervals obtained from the observed failure counts. Standard errors in Table~\ref{tab:s_gaussianscaling} are descriptive sample uncertainties. Zero observed events retain the finite upper confidence bounds reported in Table~\ref{tab:s_binomial_ci}.

\section{Experimental realization and symmetry-breaking noise}

All quantum operations in the protocol have standard ancilla-based realizations. A weak Pauli-string measurement can be implemented by preparing an ancilla in $|0\rangle_a$, applying
\begin{equation}
U(\theta)=e^{-\ii\theta P\otimes Y_a},
\end{equation}
and measuring the ancilla in the $X$ basis. The conditional data Kraus operator is
\begin{equation}
\langle r_X|U(\theta)|0\rangle_a
=
\frac{1}{\sqrt2}
\left(
\cos\theta\,I+r\sin\theta\,P
\right),
\end{equation}
which realizes Eq.~\eqref{eq:s_weakfilter} after setting
\begin{equation}
\tan\theta=\tanh(\beta/2).
\end{equation}
The weight-three operator $Z_{i-1}X_iZ_{i+1}$ can be measured by basis rotation on the central qubit, mapping the Pauli parity to an ancilla, applying the partial interaction, and reversing the mapping.

The randomized input is an $X$-basis product state with known sublattice charges. The experimental layer consists of ordinary-boundary monitoring and record storage, with all twist processing performed offline.

Exact quantization follows under record-wise strong symmetry. For calibrated deviations from that ideal, the finite-record estimator obeys a direct quantitative stability bound. Let $P(n,s)$ be the ideal joint distribution of randomized input labels and ordinary records, let $\widetilde P(n,s)$ be the noisy distribution, and use the same unit-modulus post-processing variable $X(n,s)=\chi_n(g)^*\widetilde D_{g,h}(s)$. Then
\begin{equation}
\left|\mathbb E_{\widetilde P}X-\mathbb E_P X\right|
\le 2\,d_{\rm TV}(\widetilde P,P),
\label{eq:s_noise_tv}
\end{equation}
where $d_{\rm TV}=\frac12\sum_{n,s}|\widetilde P-P|$. If each of the $M$ monitored steps, including its classical outcome register, differs from the calibrated symmetric instrument by at most $\delta$ in one-half diamond norm, the standard telescoping bound gives $d_{\rm TV}\le M\delta$ and hence a worst-case response bias at most $2M\delta$. This expresses the response stability directly in terms of experimentally calibratable instrument deviations.

Relevant symmetry-breaking mechanisms include amplitude damping, asymmetric coherent over-rotations, residual couplings, readout crosstalk, leakage, and imperfect ancilla reset. A corresponding symmetry-leakage diagnostic is, for example
\begin{equation}
\epsilon_{\rm sym}(s,g)
=
\frac{\|[E_s,U_g]\|}{2\|E_s\|},
\end{equation}
or an experimentally accessible proxy obtained from symmetry-twirled calibration data. Symmetry-preserving compilation and randomized symmetry twirling can suppress coherent violations. A noise-aware decoder can incorporate calibrated readout confusion matrices and effective Kraus maps.

The principal experimental requirements are repeated high-fidelity weight-three weak measurements, ancilla reset over many monitoring rounds, and preservation of the two sublattice symmetries. Programmable trapped-ion and superconducting architectures provide routes to small-system proof-of-principle demonstrations; extending the protocol to thermodynamic scaling requires correspondingly longer high-fidelity monitoring sequences.

\end{document}